\title{Autonomous Task Offloading of Vehicular Edge Computing with Parallel Computation Queues}
\author{Sungho Cho, Sung Il Choi, Seung Hyun Oh, Ian P. Roberts, 
and Sang Hyun Lee
\thanks{This work was supported the National Research Foundation of Korea (NRF) grant funded by the Korea government (MSIT) (2022R1A5A1027646 and RS-2025-00563388).
Sungho Cho and Ian P. Roberts are with Department of Electrical and Computer Engineering, University of California, Los Angeles, Los Angeles, CA, 90024, USA (e-mail: shcho1304@g.ucla.edu, ianroberts@ucla.edu).
Sung Il Choi, Seung Hyun Oh and Sang Hyun Lee are with the School of Electrical Engineering, Korea University, Seoul 02841, South Korea (e-mail: sungchoi@korea.ac.kr, seunghyunoh@korea.ac.kr, sanghyunlee@korea.ac.kr).}
}
\newtheorem{prop}{Proposition}
\newtheorem{lemma}{Lemma}
\newtheorem{definition}{Definition}
\newtheorem{theorem}{Theorem}
\newcommand{\floor}[1]{{\left\lfloor{#1}\right\rfloor}}
\newcommand{\rom}[1]{\lowercase\expandafter{\romannumeral #1\relax}}
\begin{document}
\maketitle

\begin{abstract}
This work considers a parallel task execution strategy in vehicular edge computing (VEC) networks, where edge servers are deployed along the roadside to process offloaded computational tasks of vehicular users. 
To minimize the overall waiting delay among vehicular users, a novel task offloading solution is implemented based on the network cooperation balancing resource under-utilization and load congestion. 
Dual evaluation through theoretical and numerical ways shows that the developed solution achieves a globally optimal delay reduction performance compared to existing methods, which is also validated by the feasibility test over a real-map virtual environment. 
The in-depth analysis reveals that predicting the instantaneous processing power of edge servers facilitates the identification of overloaded servers, which is critical for determining network delay.
By considering discrete variables of the queue, the proposed technique's precise estimation can effectively address these combinatorial challenges to achieve optimal performance.
\end{abstract}

\begin{IEEEkeywords}
mobile edge computing, task allocation, vehicular association, message-passing algorithms
\end{IEEEkeywords}


\section{Introduction}
Recent advances in artificial intelligence and wireless communication have accelerated the practical deployment of intelligent transportation applications \cite{garcia2021}.
Consequently, new delay-sensitive vehicular services have emerged, such as autonomous navigation, localization, and driving.
These services typically rely on processing large volumes of real-time measurement data under highly dynamic and time-varying vehicular conditions.
Meanwhile, vehicles are inherently constrained by time-varying wireless links, limited data storage, and limited computational capabilities \cite{liu2021vehicular}.
These limitations pose significant challenges in meeting stringent delay requirements, particularly for computationally intensive tasks such as natural language processing and large language models.

To address these challenges, vehicular edge computing (VEC), which consists of edge servers for task processing and roadside units (RSUs) to communicate with vehicles, has been introduced to provide computational resources by virtue of the physical proximity of edge servers to vehicles \cite{waheed2022comprehensive, duan2020emerging}.
When a vehicle requests task offloading via an RSU, the connected edge server processes the task and returns the computed result.
This architecture is inherently designed to minimize processing delay, which comprises (i) communication delay, defined as the time required to transmit and receive data via RSUs, and (ii) computation delay, which corresponds to the processing time of offloaded tasks from vehicles \cite{mach2017mobile}.

\vspace{-3mm}
\subsection{Motivation, background, and related works}
Vehicular edge computing (VEC) networks aim to identify a task allocation strategy that minimizes the total offloading delay and prevents network congestion. 
This problem, commonly known as vehicle association, has been widely investigated using a number of approaches  \cite{lyu2016multiuser, moubayed2020edge, zhang2019task, fan2025veh, zhang2025drl, fan2024deep}.
Early studies assign vehicular tasks to the nearest edge server \cite{lyu2016multiuser}, whereas subsequent studies incorporate the available computing capacities of edge servers \cite{moubayed2020edge} or offload tasks to remote cloud servers when they provide lower processing latency \cite{zhang2019task}.
The extended association framework introduces cooperative mechanisms among RSUs to address computing result reception failure arising from high vehicle mobility and load imbalance \cite{fan2025veh, zhang2025drl, fan2024deep}. 
In \cite{fan2025veh}, RSU-to-RSU cooperation is integrated into a deep reinforcement learning (DRL)–based offloading scheme to mitigate task-result reception failures when vehicles rapidly move across coverage boundaries. In \cite{zhang2025drl}, under-utilized parked vehicles are leveraged as auxiliary computing nodes to enhance load distribution. A hierarchical orchestration strategy is proposed in \cite{fan2024deep}, where lightweight tasks are executed locally and intensive models are offloaded to RSUs to balance latency and inference accuracy.

Modern VEC systems increasingly rely on multi-core edge servers to support parallel processing \cite{zhang2024survey}. Efficient utilization of multi-core processor architectures, however, introduces a nontrivial task assignment challenge. To fully exploit multi-CPU concurrency, tasks are distributed in a manner that maximizes parallel execution while preventing resource under-utilization and load imbalance \cite{wang2009mapping}.
Several studies have investigated multi-CPU edge computing \cite{zhong2022potam, teng2023game, liang2019multiuser, lee2024distributed, sun2023bargain, feng2021joint}. 
A common strategy focuses on restricting the number of offloaded tasks according to the available CPU count \cite{zhong2022potam, teng2023game}.
In \cite{zhong2022potam}, resource heterogeneity across macro and small base stations is considered to determine the volume of offloaded tasks adaptively. The metadata-first mechanism in \cite{teng2023game} allows the optimal offloading destination to be selected concurrently with data reception.
Furthermore, virtualization techniques are utilized to facilitate parallel execution.
Virtual machines are used to reduce the I/O interference that grows exponentially with the number of concurrently processed tasks \cite{liang2019multiuser} and to further minimize worst-case delay to ensure task reliability \cite{lee2024distributed}. A heterogeneous multi-CPU server architecture is examined in \cite{sun2023bargain}, where load balancing across servers with different CPU capacities is formulated as a stable matching problem. In \cite{feng2021joint}, a CPU-level task partitioning strategy enables concurrent execution of a single task across multiple CPUs. 

To address traffic imbalance in VEC systems, queueing mechanisms have been introduced to prevent task drops when the volume of offloaded tasks exceeds the server capacity \cite{luo2021resource, omar2012evaluation}. 
Most existing offloading strategies model queuing delays under the assumption of unbounded queues \cite{song2021joint, hou2023intelligent, chen2025multi, Wang2024invar}.
In \cite{song2021joint}, inter-RSU connectivity in a heterogeneous VEC architecture is leveraged to minimize the overall delay using evolutionary optimizations. 
In \cite{hou2023intelligent}, the trade-off between energy consumption and queuing delay is explicitly analyzed, while large-scale VEC networks are considered so that vehicles autonomously choose offloading policies and regulate their individual waiting delays \cite{chen2025multi}.
Excessive queuing delays may result in performance inversion, and edge computing becomes slower than remote cloud processing \cite{Wang2024invar}. To mitigate this issue, a heuristic optimization approach is developed in \cite{Wang2024invar}. 
In practice, edge servers maintain queues with finite capacity \cite{BALSAMO2003269} because they can only receive service requests from vehicles located within their geographically constrained coverage regions \cite{kong2022edge}. 
These constraints inherently limit the number of tasks that can be accommodated by each server.

Prior studies on finite-length queues primarily focus on single-CPU systems, where the CPU is assumed to remain continuously occupied \cite{dong2023load, yue2021todg, ma2022drl, peng2024stochastic, pervez2025efficient}. The main objective of these studies is to maintain the queue stability.
Load balancing is used in \cite{dong2023load} to avoid overwhelming individual servers, while \cite{yue2021todg} develops a preemptive strategy that drops outdated tasks to ensure bounded queue length.
Both long-term and short-term queue sizes are managed through DRL that prevents intermittent drop-out of tasks \cite{ma2022drl}. CPU frequency control is explored in \cite{peng2024stochastic} to balance energy consumption and queue length. In \cite{pervez2025efficient}, transmission and computational queues are jointly controlled in resource-limited satellite–aerial networks to prevent computation and communication failures. 
Despite their contributions, these studies rely on a single-CPU processing model and therefore overlook a critical issue of resource under-utilization in multi-CPU environments. When the number of CPUs exceeds the number of assigned tasks, idle processors reduce effective parallelism and contribute to load imbalance, ultimately increasing queuing delay. 

Managing multi-CPU edge servers with finite queues involves determining the optimal number of task assignments to balance CPU under-utilization and congestion. Most existing works simplify the underlying optimization by relaxing discrete task counts into continuous values \cite{zhong2022potam, feng2021joint, pervez2025efficient}.
However, such relaxations fail to capture the intrinsic threshold behavior of queuing systems. 
No queuing delay exists when the number of tasks does not exceed the number of CPUs, but once this tipping point is crossed, delay escalates sharply. 
Since task allocations distort this discontinuous threshold behavior, these approximations are fundamentally limited in identifying the true optimal task assignment.

\begin{table*}[!t]
\centering
\label{tab:comparison}
\renewcommand{\arraystretch}{1.1}
\resizebox{\textwidth}{!}{
\begin{tabular}{c|c|c|c|c|c|c|c}
\hline
Papers &
\shortstack{{Load}\\{Balancing}} &
\shortstack{{Parallel}\\{Processing}} &
\shortstack{{Queueing}\\{Delay}} &
\shortstack{{Resource}\\{Under-utilization}} &
\shortstack{{Finite length}\\{Queue}} &
\shortstack{{Distributed}\\{Algorithm}} &
\shortstack{{Non-relaxed}\\{Algorithm}} \\
\hline
\cite{lyu2016multiuser}  & $-$ & $-$ & $-$ & $-$ & $-$ & $-$ & $-$ \\\hline
\cite{moubayed2020edge} & $-$ & $\checkmark $ & $-$ & $\checkmark $ & $-$ & $-$ & $\checkmark $ \\\hline
\cite{zhang2019task} & $\checkmark $ & $-$ & $-$ & $\checkmark $ & $\checkmark $ & $-$ & $\checkmark $ \\\hline
\cite{fan2025veh,zhang2025drl,fan2024deep} & $\checkmark $ & $-$ & $-$ & $-$ & $-$ & $-$ & $-$ \\\hline
\cite{zhong2022potam} & $\checkmark $ & $\checkmark $ & $-$ & $\checkmark $ & $-$ & $\checkmark $ & $-$ \\\hline
\cite{teng2023game} & $\checkmark $ & $\checkmark $ & $-$ & $\checkmark $ & $-$ & $\checkmark $ & $\checkmark $  \\\hline
\cite{liang2019multiuser}&$\checkmark $ & $\checkmark $ & $-$ & $-$ & $-$ & $-$ & $-$ \\\hline
\cite{lee2024distributed}&$\checkmark $ & $\checkmark $ & $-$ & $-$ & $-$ & $\checkmark $ & $\checkmark $ \\\hline
\cite{sun2023bargain} & $\checkmark $ & $\checkmark $ & $-$ & $\checkmark $ & $-$ & $-$ & $\checkmark $ \\\hline
\cite{feng2021joint} & $\checkmark $ & $\checkmark $ & $-$ & $-$ & $-$ & $\checkmark $ & $-$ \\\hline
\cite{song2021joint,Wang2024invar} & $\checkmark $ & $\checkmark $ & $\checkmark $ & $-$ & $-$ & $-$ & $\checkmark $ \\\hline
\cite{hou2023intelligent,chen2025multi} & $\checkmark $ & $\checkmark $ & $\checkmark $ & $-$ & $-$ & $\checkmark $ & $\checkmark $ \\\hline
\cite{dong2023load,ma2022drl,peng2024stochastic} & $\checkmark $ & $-$ & $\checkmark $ & $-$ & $\checkmark $ & $-$ & $\checkmark $ \\\hline
\cite{yue2021todg} & $\checkmark $ & $-$ & $\checkmark $ & $-$ & $\checkmark $ & $\checkmark $ & $\checkmark $ \\\hline
\cite{pervez2025efficient}& $\checkmark $ & $-$ & $\checkmark $ & $-$ & $\checkmark $ & $\checkmark $ & $-$ \\\hline
{Our Scheme} & $\checkmark $ & $\checkmark $ & $\checkmark $ & $\checkmark $ & $\checkmark $ & $\checkmark $ & $\checkmark $ \\
\hline
\end{tabular}
}
\caption{Comparison with related works ($\checkmark $: yes, $-$: no).}
\end{table*}

\subsection{Contributions}

To overcome this limitation, this work develops a combinatorial offloading strategy that avoids relaxation and effectively balances resource under-utilization and task congestion.
Task assignments are decomposed into subproblems and cooperatively determined through interactions between edge servers and vehicles, implemented via a message-passing (MP) framework inspired by dynamic programming principles \cite{kschischang2001factor}.
Individual queue lengths at edge servers and offloading decisions of vehicles are managed by combinatorially solving these subproblems, with the solutions encapsulated into compact messages.
After the message exchanges, all agents converge to a globally consistent and delay-efficient allocation.


Numerical simulations and a feasibility study conducted in a digital-twin VEC environment based on real-map data demonstrate the superior efficiency of the proposed strategy compared to existing vehicle association algorithms.
Furthermore, rigorous theoretical analysis guarantees both convergence and global optimality of the obtained solution.
These findings highlight the practical potential of the proposed offloading technique for real-world deployment in VEC networks.
The primary contributions of this work are summarized as:
    \begin{itemize}
    \item The waiting delay caused by the imbalance between the number of CPUs and the volume of offloaded tasks is analyzed in vehicular edge servers equipped with multiple CPUs. 
    An optimization problem is formulated to determine the optimal short-term task offloading association that minimizes the overall network queuing delay.

    \item A decentralized allocation mechanism is developed in which RSUs and vehicular users collaboratively determine vehicle associations.
    The convergence and global optimality of the proposed algorithm are theoretically proven and validated through simulations. 
    In addition, its feasibility is assessed using a real-map-based testbed.
    
    \item An in-depth numerical analysis reveals that when task queues are nearly full, predicting the instantaneous remaining computing capacity of edge servers is critical for minimizing waiting delays.
    Instead of approximating the number of tasks that can be processed at once as a continuous value, the proposed technique evaluates waiting delays based on discrete queued tasks to determine the optimal solution.
\end{itemize}

The remainder of the paper is organized as follows: the system model of the parallel edge computing network is presented in Section \ref{sysmodel}. 
Section \ref{allocation} derives the distributed algorithm for the latency minimization target. 
Several technical and theoretical issues for real-world deployment are addressed in Section \ref{theory}. 
Section \ref{results} assesses the impact of scalability, system parameters, and performance of the proposed algorithm. 
Finally, Section \ref{conclusion} concludes the paper.
Table \ref{table:key_notations} summarizes the notations used throughout this work.

\begin{table}
	\centering
	\caption{Notations and descriptions}
	\begin{tabular}{ll}
		\hline
        \hline
		\textbf{Notation} & \textbf{Description}\\
		\hline
        $\mathcal{V}$ & Set of vehicles, indexed by $i$\\
		$\mathcal{A}$ & Set of RSUs, indexed by $a$\\
        $V$ & Number of vehicles\\
        $A$ & Number of RSUs\\
        $c$ & CPU index\\
        $k_a$ & Number of CPUs installed in RSU-$a$\\
        $L_i$ & Size of transmission data of vehicle-$i$\\
        $C_i$ & Required computation resources of vehicle-$i$ \\
        $d_{ia}$ & Distance between vehicle-$i$ and the RSU-$a$\\
        $d_a$ & Circular coverage of RSU-$a$\\
        $B_a$ & System bandwidth of RSU-$a$\\
        $B_p$ & Fixed Bandwidth for vehicle\\
        $h_{ia}$ & Channel power gain from vehicle-$i$ to RSU-$a$\\
        $p_i$ & Transmission power of vehicle-$i$\\
        $\sigma^2$ & Noise power\\
        $r_{ia}$ & Data transmission rate from vehicle-$i$ to RSU-$a$\\
        $f_i^l$ & CPU frequency of vehicle-$i$\\
        $f_a$ & CPU frequency of RSU-$a$\\
        $x_{ia}$ & Offloading decision variables\\
        $n_a$ & Number of tasks associated with RSU-$a$\\
        $N_a$ & Maximum number of tasks associated with RSU-$a$\\
        $n_{ac}$ & Number of tasks associated with CPU-$c$ in RSU-$a$\\
        $T_i^l$ & Local Computing delay of vehicle-$i$\\
        $T_{ia}^o(n_a)$ & Computational delay of vehicle-$i$ offloaded to RSU-$a$\\
        $T_{ia}^c$ & Communication delay from vehicle-$i$ to RSU-$a$\\
        $\mathcal{T}_{ia}(n_a)$ & Overall processing delay from vehicle-$i$ to RSU-$a$\\
        $t^{\mathrm{max}}$ & The longest acceptable delay\\
        $\mathcal{X}_i$ & Sets of variables $x_{ia}, \forall a \in \mathcal{A}$\\
        $\mathcal{X}_a$ & Sets of variables $x_{ia}, \forall i \in \mathcal{V}$\\
        $Q_i(\mathcal{X}_i)$ & Factor function for vehicle-$i$\\
        $R_a(\mathcal{X}_a)$ & Factor function for RSU-$a$\\
        $\mu_{a\rightarrow b}(c=d)$ & Preference that variable $c$ takes on the value of $d$,\\
         &transferred from a node $a$ to $b$\\
        $\alpha_{ia}, \beta_{ia}, \rho_{ia}, \eta_{ia}$& Messages of the factor graph\\
        $\mathrm{rank}^l[X]$ & $l$-th smallest value for a set $X$\\
        $\Psi_{ia}(b)$ & Value for representing each term of $\alpha_{ia}$\\
        $b_{ia}$ & Decision metrics to determine $x_{ia}$\\
		\hline
	\end{tabular}
	\label{table:key_notations}
\end{table}

\section{System Model}
\label{sysmodel}
%
\begin{figure} 
\begin{center}
\includegraphics[width=.9\linewidth]{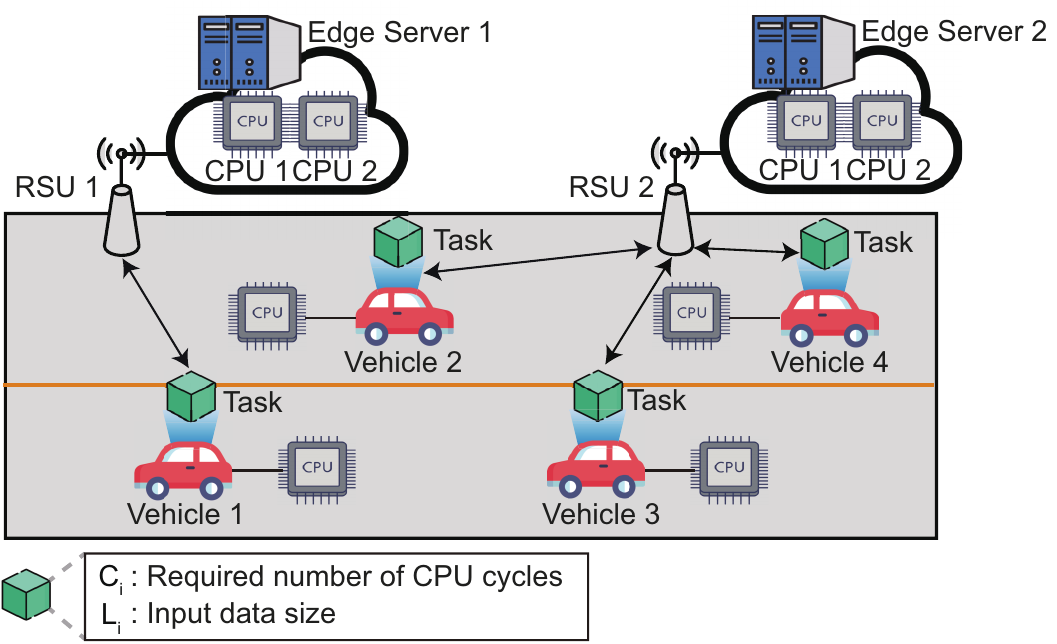}
\caption{Multiprocessor computation vehicular offloading system.} \label{fig1}
\end{center}
\end{figure}



We consider a vehicular network consisting of $A$ RSUs deployed along the roadside and $V$ vehicles, as illustrated in Fig. \ref{fig1}.
Let $\mathcal{A} = \{0, 1, 2, \ldots, A\}$ and $\mathcal{V} = \{1, 2, \ldots, V\}$ denote the index sets of RSUs and vehicles, respectively.
RSU-$a$ refers to the $a$-th RSU for $a \in \mathcal{A} \backslash {0}$, while $a = 0$ represents the on-board unit of a vehicle, indicating local task execution. 
Similarly, vehicle-$i$ refers to the $i$-th vehicle for $i \in \mathcal{V}$. 
Each RSU is equipped with a dedicated edge server capable of executing computational tasks. 
The edge server at RSU-$a$ is equipped with $k_a$ distinct CPUs, each capable of independently processing multiple tasks.
In addition, each vehicle is equipped with a local computing unit for executing vehicular applications such as media or sensing services.
The computational characteristics of a task generated by vehicle-$i$ are defined by its data volume $L_i$ (in bits) and its computational workload $C_i$ (in CPU cycles), with an expected mean value denoted by $\overline{C}_i$.
Each task is processed by a single CPU at a time.
Vehicles generate tasks periodically within a pre-defined time window, denoted by $t^{\mathrm{max}}$, which specifies the maximum permissible time for task completion. 
Since vehicles are inherently both energy and computational resource-limited, they may be unable to complete intensive tasks locally.
In this work, the energy consumption of RSUs is not considered as a design constraint. 
Although multi-CPU edge servers incur higher power consumption from the increased number of active processing units, this cost is offset by the reduced processing delay achieved through parallel processing \cite{shojafar2016energy}. 
Thus, this work mainly focuses on determining whether tasks should be offloaded to RSUs or processed locally to minimize delay and ensure timely service delivery.

In this environment, the vehicle-to-RSU association aims at task allocation in a way that minimizes queuing delays caused by limited CPU resources at the RSUs.
To model this formally, a binary variable $x_{ia}$ is introduced to represent the association status of each vehicle. 
Thus, $x_{ia} = 1$ indicates that vehicle-$i$ is associated with RSU-$a$, whereas $x_{i0} = 1$ denotes local execution on vehicle-$i$. 
The total number of vehicles associated with RSU-$a$ is given by 
$n_a = \sum_{i=1}^V x_{ia}$.
The computation delay for vehicle-$i$ when associated with RSU-$a$, denoted by $\mathcal{T}_{ia}(n_a)$, is given by
\begin{align} \label{obj_delay}
    \mathcal{T}_{ia}(n_a) =&\begin{cases} T_{i}^{\mathrm{loc}} & \text{if~} a = 0,\\
T_{ia}^{\mathrm{off}}(n_a) + T_{ia}^{\mathrm{com}} & \text{otherwise,} \end{cases}
\end{align} 
where $T_{i}^{\mathrm{loc}}$ is the local computing delay of vehicle-$i$, and $T_{ia}^{\mathrm{off}}(n_a)$ is the offloading delay at RSU-$a$, including queuing effects of $n_a$ concurrent tasks. Also, $T_{ia}^{\mathrm{com}} $ is the communication time between vehicle-$i$ and RSU-$a$.
The local processing time for vehicle-$i$, denoted by $T_{i}^{\mathrm{loc}}$, is equal to
\begin{align} \label{local_delay}
T_{i}^{\mathrm{loc}} = \frac{C_{i}}{f_{i}^{\mathrm{loc}}},
\end{align}
where $f_i^{\mathrm{loc}}$ is the CPU frequency of the local computing unit.

When a task is offloaded to RSU-$a$, the total processing delay consists of three components:  transmitting delay, waiting delay, and execution delay. 
If the number of offloaded tasks $n_a$ exceeds the number of available CPUs $k_a$, the excess tasks are queued, leading to additional waiting delay. 
This delay depends on both the number of available CPUs at RSU-$a$, denoted by $k_a$, and the total number of simultaneously assigned tasks, denoted by $n_a$.
The execution time for each task is determined by the ratio of the required CPU cycles $C_i$ for vehicle-$i$ to the CPU frequency $f_a$ of RSU-$a$. 
The resulting execution delay, including queuing effects, is given by
\begin{align} \label{RSU_delay}
    T_{ia}^{\mathrm{off}}(n_a) =  \left(1 - \frac{k_a}{2n_a} \floor{\frac{n_a}{k_a}}\right)\left(1 + \floor{\frac{n_a}{k_a}}\right) \frac{C_i}{f_a}.
\end{align}
\begin{prop}
The number of tasks assigned to CPU-$c$ in RSU-$a$, denoted by $n_{ac}$, is given by
        \begin{align} \label{cpu_num}
            n_{ac} =&\begin{cases} \floor{\frac{n_a}{k_a}}+1 & \text{if~}1\leq c \leq n_a - k_a \floor{\frac{n_a}{k_a}},\\[0.5em]
            \floor{\frac{n_a}{k_a}} & \text{if~} n_a - k_a \floor{\frac{n_a}{k_a}} + 1 \leq c \leq k_a. \end{cases}
        \end{align}
    \end{prop}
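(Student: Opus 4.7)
\textit{Proof plan.}
The plan is to prove the formula by first identifying the intended allocation rule that underlies equation (3) and then verifying that (4) is the unique (up to CPU relabeling) realization of that rule. Since $T_{ia}^{\mathrm{off}}(n_a)$ in (3) assumes a balanced distribution of the $n_a$ offloaded tasks across the $k_a$ CPUs of RSU-$a$, the quantity $n_{ac}$ should be the most even partition of $n_a$ into $k_a$ non-negative integers, i.e.\ any two CPUs differ in load by at most one task.

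First, I would invoke the division algorithm to write $n_a = k_a \, q + r$ with $q = \lfloor n_a / k_a \rfloor$ and $r = n_a - k_a \lfloor n_a / k_a \rfloor \in \{0, 1, \dots, k_a - 1\}$. Under the most-even partition, exactly $r$ CPUs must receive $q+1$ tasks and the remaining $k_a - r$ CPUs must receive $q$ tasks; indexing the overloaded CPUs as $c \in \{1,\dots,r\}$ and the rest as $c \in \{r+1,\dots,k_a\}$ reproduces (4) verbatim. A one-line consistency check confirms $\sum_{c=1}^{k_a} n_{ac} = r(q+1) + (k_a - r) q = k_a q + r = n_a$, so the allocation is feasible.

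Next, I would justify that no other allocation achieves the delay expression (3). If some CPU had load $\ge q+2$ while another had load $\le q$, shifting one task from the heavier CPU to the lighter one would strictly reduce the maximum completion time across CPUs and would shorten the average per-task finishing time used to derive (3); hence any CPU's load is confined to $\{q, q+1\}$. Since the sum is fixed at $k_a q + r$, exactly $r$ CPUs must carry $q+1$ tasks, matching (4) up to a permutation of CPU indices.

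The main obstacle, which is actually mild, is ensuring that the argument aligns precisely with the averaging used in (3) rather than with a different criterion such as makespan alone. To close the loop, I would briefly verify that substituting the partition in (4) into the per-CPU sequential completion times $C_i/f_a,\, 2C_i/f_a,\, \dots,\, n_{ac} C_i/f_a$ and averaging over all $n_a$ tasks yields $(q+1)(2n_a - k_a q)/(2n_a) \cdot C_i/f_a$, which simplifies to the right-hand side of (3). This verifies both the formula (4) and its consistency with the delay model already assumed in the system description.
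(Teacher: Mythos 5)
Your proposal is correct, and the consistency check $\sum_{c=1}^{k_a} n_{ac} = r(q+1)+(k_a-r)q = n_a$ and the verification that the balanced partition reproduces the delay coefficient in \eqref{RSU_delay} are both accurate. However, your route differs from the paper's in emphasis. The paper does not derive \eqref{cpu_num} as an optimality result: it explicitly notes that optimal multi-CPU scheduling is NP-complete, adopts round-robin load-sharing as a modeling assumption backed by the scheduling literature, and then obtains \eqref{cpu_num} almost definitionally — each CPU first receives $\floor{n_a/k_a}$ tasks and the remaining $n_a - k_a\floor{n_a/k_a}$ tasks are handed one-by-one to the first CPUs by the pigeonhole principle, with the same consistency condition you check. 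You instead characterize \eqref{cpu_num} as the unique (up to relabeling) most-even partition via the division algorithm and justify it with an exchange argument showing any allocation with loads outside $\{q,q+1\}$ is strictly worse. That buys a self-contained justification tied to the delay model rather than an appeal to a policy assumption, which is arguably stronger; the paper's route buys simplicity and avoids claiming optimality it does not need. Two caveats on your extra step: the claim that moving a task from a CPU with load $\geq q+2$ to one with load $\leq q$ strictly reduces the \emph{maximum} completion time is not always true (another CPU may tie the maximum), though your parallel claim about the cumulative finishing-time sum does hold strictly since $\sum_c n_{ac}(n_{ac}+1)/2$ decreases by $a-(b+1)>0$; and with heterogeneous workloads $C_i$ the count-based exchange argument is valid only in expectation under the paper's uniform-permutation averaging (exchangeable task durations), so you should state that your optimality comparison is with respect to expected delay, exactly as in \eqref{avgt}, rather than per realization.
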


Identifying the optimal multi-CPU scheduling configuration is known to be NP-complete \cite{hartmanis1982computers}.
However, previous studies have shown that the performance gain from complex scheduling algorithms is often marginal compared to simple heuristics \cite{stallings2009operating}. 
As a result, load-sharing scheduling, which assigns incoming tasks to CPUs in a round-robin manner, has been widely used in practice \cite{xu2007load, lepers2017towards, hofmeyr2010load, zhuravlev2012survey}.
Under this policy, each CPU initially receives $\floor{\frac{n_a}{k_a}}$ tasks. 
The remaining tasks are then distributed one-by-one to the first few CPUs, in accordance with the pigeonhole principle.
The total number of tasks distributed across CPUs, denoted by $n_{ac}$, satisfies the consistency condition $\sum_{c=1}^{k_a} n_{ac} = n_a$.

\begin{prop}
The total waiting delay for $n_{ac}$ tasks assigned to CPU-$c$ in RSU-$a$, denoted by $T_{\mathrm{all}}^{ac}$, is given by
\begin{align}\label{delay_one_cpu}
    T_{\mathrm{all}}^{ac} &= T_{1}^{ac} + (T_{1}^{ac} + T_{2}^{ac}) + \ldots + (T_{1}^{ac}  + \ldots + T_{n_{ac}-1}^{ac}) \nonumber\\
    &= \sum_{m=1}^{n_{ac}-1} (n_{ac}-m)T_m^{ac}.
\end{align}
\end{prop}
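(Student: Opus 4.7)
The statement is a straightforward identity once one makes the underlying queueing discipline explicit: the $n_{ac}$ tasks assigned to CPU-$c$ in RSU-$a$ are served sequentially (one at a time, in FIFO order), since a single CPU can process only one task concurrently. My plan is to first unpack the left-hand side by writing each task's individual waiting time, and then rearrange the double sum by a simple exchange of summation order.

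First, I would fix the service discipline. Index the tasks assigned to CPU-$c$ by $m = 1, 2, \ldots, n_{ac}$ in the order of processing, and denote by $T_m^{ac}$ the execution time of the $m$-th task. Under FIFO sequential service, the $m$-th task can only begin once all preceding tasks have completed. Therefore, its waiting time equals $\sum_{j=1}^{m-1} T_j^{ac}$, with the convention that the first task has zero waiting time. This yields exactly the telescoping-style expression in the first line of \eqref{delay_one_cpu}, namely
\begin{align*}
T_{\mathrm{all}}^{ac} = \sum_{m=1}^{n_{ac}} \sum_{j=1}^{m-1} T_j^{ac}.
\end{align*}

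Next, I would swap the order of summation. For a fixed $j \in \{1, 2, \ldots, n_{ac}-1\}$, the term $T_j^{ac}$ contributes to the waiting time of every subsequent task $m \in \{j+1, j+2, \ldots, n_{ac}\}$, so it is counted exactly $n_{ac} - j$ times. Reindexing with $m \leftarrow j$ yields
\begin{align*}
T_{\mathrm{all}}^{ac} = \sum_{m=1}^{n_{ac}-1} (n_{ac} - m)\, T_m^{ac},
\end{align*}
which matches the claimed closed form. An equivalent route is a short induction on $n_{ac}$: adding an $n_{ac}$-th task to a queue of $n_{ac}-1$ tasks contributes one additional copy of each of $T_1^{ac}, \ldots, T_{n_{ac}-1}^{ac}$ to the aggregate waiting time, which increments every coefficient $(n_{ac}-1-m)$ in the previous sum by one, as desired.

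I do not foresee a substantive obstacle in the algebra itself; the main point of care is simply justifying that single-CPU service forces the sequential FIFO structure used above, and that the indexing $m = 1, \ldots, n_{ac}$ is consistent with the load-sharing round-robin assignment of Proposition~1 (which determines only $n_{ac}$, not the internal ordering, so any labelling of the $n_{ac}$ tasks on CPU-$c$ suffices). With this observation made, the remaining steps reduce to a one-line interchange of summation.
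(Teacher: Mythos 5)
Your proposal is correct and follows essentially the same route as the paper: the paper likewise observes that under sequential execution the $m$-th task waits for the cumulative processing times of its predecessors and then regroups the sum so that each $T_m^{ac}$ is counted $n_{ac}-m$ times. Your explicit interchange of summation and the remark about FIFO ordering merely make precise what the paper states informally, so there is no substantive difference.
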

Consider $n_{ac}$ tasks assigned to CPU-$c$ in RSU-$a$, each with a processing time denoted by $T_1^{ac}, \ldots, T_{n_{ac}}^{ac}$. Under sequential execution, the first task begins immediately, experiencing no waiting delay. 
The second task starts after $T_1^{ac}$, the third after $T_1^{ac} + T_2^{ac}$, and so on. 
The total waiting time is the cumulative sum of the processing times of preceding tasks, as in \eqref{delay_one_cpu}.

To compute the expected total delay across all CPUs at RSU-$a$, we make the following assumptions.
The communication time differences between vehicles within the same RSU coverage area are negligible compared to computation times. 
Task arrivals to RSU-$a$ are simultaneous, and $n_a$ tasks can arrive in $n_a!$ possible permutations.
Let ${\tau_m^{ac}}$ be the corresponding sequence of task durations for one such permutation, and let $\tau = \{ (\tau_1^{a1},\ldots,\tau_{n_{a1}}^{a_1}, \tau_{1}^{a_2}, \ldots,  \tau_{n_{k_a}}^{ak_a}) \}$ represent the task permutation across all CPUs in RSU-$a$. 
Then, the expected total delay, denoted by $T_{\mathrm{all}}^a$, is given as
    \begin{align} \label{avgt}
        T_{\mathrm{all}}^a=& \frac{1}{n_a!}\sum_{\tau}\sum_{c=1}^{k_a}\left[\sum_{m=1}^{n_{ac}-1} (n_{ac}-m) \tau_m^{ac} + \sum_{m=1}^{n_{ac}}\tau_m^{ac}\right]\nonumber\\
        =&\frac{1}{n_a!}\left[ \sum_{c=1}^{k_a}\sum_{m=1}^{n_{ac}}(n_{ac}-m+1)\right] \cdot \left[ (n_a-1)!\sum_{c=1}^{k_a}\sum_{m=1}^{n_{ac}}T_m^{ac}\right]\nonumber\\
        =&\left[ \sum_{c=1}^{k_a} \frac{n_{ac}(n_{ac}+1)}{2n_a}\right] \cdot \sum_{c=1}^{k_a}\sum_{m=1}^{n_{ac}}T_m^{ac}\nonumber\\
        =&\left(1 - \frac{k_a}{2n_a} \floor{\frac{n_a}{k_a}}\right)\left(1 + \floor{\frac{n_a}{k_a}}\right) \cdot \sum_{c=1}^{k_a}\sum_{m=1}^{n_{ac}}T_m^{ac}.
\end{align}
Note that $\sum_{c=1}^{k_a}\sum_{m=1}^{n_{ac}}T_m^{ac}$ is the total execution time of tasks offloaded to RSU-$a$, with each $T_m^{ac} = \frac{C_m}{f_a}$. 
Therefore, the expected processing delay for a vehicle is consistent with \eqref{RSU_delay}.

Vehicles transmit their data to RSUs over wireless links. 
Each RSU provides a circular coverage with radius $d_a$, and a vehicle can connect to the RSU upon entering its coverage area. 
The channel gain of the link between vehicle-$i$ and RSU-$a$ is given by $h_{ia} = \zeta (d_{ia})$, which is a function of the distance between vehicle-$i$ and the RSU-$a$, denoted by $d_{ia}$.
Vehicle-$i$ communicates with RSU-$a$ over a fixed bandwidth $B_p$. If RSU-$a$ uses a system bandwidth $B_a$, at most $N_a = \floor{\frac{B_a}{B_p}}$ vehicles can offload their tasks to RSU-$a$.  The data rate of the wireless link with transmit power $p_i$ is given by
\begin{align} \label{rate}
r_{ia} = B_p\log_{2}\left(1 + \frac{p_{i}\left|h_{ia}\right|^2}{\sigma^2}\right),
\end{align}
where $\sigma^2$ is the noise power. 
Owing to the fact that the output size transmitted from the RSU to vehicle $i$ is substantially smaller than the input size uploaded from the vehicle to the RSU, and considering that RSUs typically operate with high transmission power, the resulting transmission delay from RSU $a$ to vehicle $i$ is considered negligible \cite{xu2019energy, chen2015efficient}.
Therefore, the uplink transmission delay from vehicle-$i$ to RSU-$a$ is computed as 
\begin{align} \label{comm_time}
T_{ia}^{\mathrm{com}} = \frac{L_i}{r_{ia}}.
\end{align}
The total processing delay between vehicle-$i$ and RSU-$a$ includes all time delay functions calculated in \eqref{local_delay}, \eqref{RSU_delay}, and \eqref{comm_time} and is expressed as 
\begin{align}\label{obj_delay2}
\mathcal{T}_{ia}(n_a)=\begin{cases} \frac{C_{i}}{f_{i}^{\mathrm{loc}}}   &\text{if~} a = 0,\\
\left(1 - \frac{k_a}{2n_a} \floor{\frac{n_a}{k_a}}\right)\left(1 + \floor{\frac{n_a}{k_a}}\right) \frac{C_i}{f_a} + \frac{L_i}{r_{ia}} &\text{o.w.} \end{cases}
\end{align}

Fig. \ref{fig_sinario} illustrates an example of vehicle association strategy to minimize the delay in \eqref{obj_delay2}. When RSU-$2$ with $2$ CPUs becomes overloaded with $3$ incoming tasks, one of these tasks may be reassigned to RSU-$1$, which has processing capacity.
Such a reassignment increases the processing delay at RSU-$1$ but reduces the overall network delay. 
Thus, task redistribution across RSUs is jointly optimized to minimize the global processing delay.
To formally capture this objective, the task offloading optimization is formulated as 
\begin{subequations} \label{opt1}
\begin{align}
\min_{\{x_{ia}\}} &\quad \sum_{i=1}^V \sum_{a=0}^A \min (\mathcal{T}_{ia}(n_a),t^{\mathrm{max}}_i) x_{ia} \label{obj1} \\
\text{subject to}&\quad \sum_{a=0}^A x_{ia}=1, ~ \forall i \in \mathcal{V}, \label{const1}\\
&\quad \sum_{i=1}^V x_{ia} = n_a, ~ \forall a \in \mathcal{A}\backslash \{0\},  \label{const2} \\
&\quad x_{ia}\in\{0,1\},~ \forall (i,a) \in  \mathcal{V}\times \mathcal{A},\\ 
&\quad n_a \in\{0,\ldots,N_a\},~ \forall a \in \mathcal{A}\backslash \{0\}.
\end{align}
\end{subequations}
\begin{figure} 
\begin{center}
\includegraphics[width=.9\linewidth]{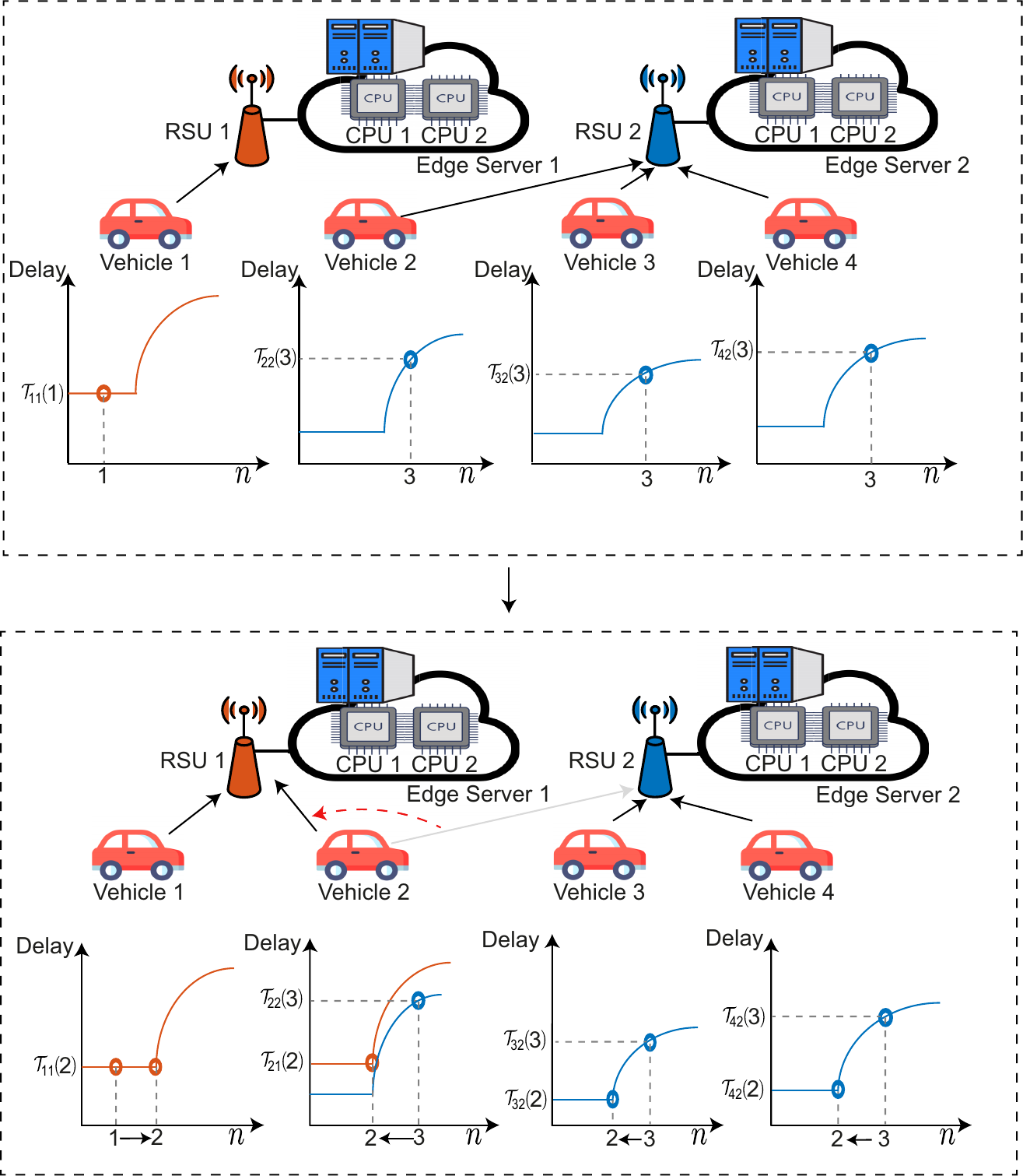}
\caption{Vehicle association for task offloading delay minimization.} \label{fig_sinario}
\end{center}
\end{figure}

Note that the objective function in \eqref{obj1} is bounded by a delay threshold $t^{\mathrm{max}}_i$, which represents the maximum allowable delay within a single scheduling frame. 
If the actual delay exceeds this limit, the task is considered to be in outage.
Furthermore, $t^{\mathrm{max}}_i$ can be adjusted to encode task priorities. Higher-priority tasks are assigned smaller $t^{\mathrm{max}}_i$ values, whereas lower-priority tasks receive larger thresholds. 
This mechanism naturally guides high-priority tasks toward RSUs with improved computational resources.
Constraint \eqref{const1} guarantees that each vehicle selects either one RSU or local computing, while constraint \eqref{const2} defines $n_a$ as the total number of vehicles associated with RSU-$a$.

This optimization problem is inherently dynamic in that the delay function $\mathcal{T}_{ia}(n_a)$ depends on the number of associated tasks. When vehicles autonomously choose their offloading policies, the delay values  evolve according to the global association configuration, rendering centralized global optimization challenging.
Indeed, the problem can be reduced from the $n$-partition problem, where $V$ vehicles are partitioned into $A+1$ disjoint groups of no more than $N_a$ vehicles each, while attempting to match the resulting delay in each group to its optimal delay configuration. 
This problem is known to be NP-hard \cite{garey2002computers}, indicating that obtaining the optimal solution is computationally demanding.

To this end, an alternative distributed optimization strategy is developed.
Instead of jointly optimizing the full network configuration in a centralized manner, \eqref{opt1} is decomposed into subproblems corresponding to individual RSUs and vehicles and resolved by leveraging local information of individual agents.
Each RSU is responsible for managing its own task queue by determining which and how many vehicles should be associated with it, while vehicles select their offloading targets based on local utility metrics such as delay or queue length. 
The configuration of the entire network is then achieved through negotiation or coordination among these local offloading policies to ensure the global consistency without requiring complete knowledge. 
Furthermore, this approach reduces the computation loads associated with centralized optimization and scales more effectively in large-scale VEC environments. 
Thus, a distributed algorithm becomes a natural and scalable choice by enabling network-wide parallelism without relying on centralized cloud with intensive computational capabilities.

\section{Distributed Algorithm}
\label{allocation}
This section develops a distributed algorithm via an MP framework to find the optimal vehicle association. 

\vspace{-3mm}
\subsection{Formulation} \label{vehicle section}
The design of a distributed solution aims to employ the principle of dynamic programming, whereby a global optimization problem is solved through the combination of optimal solutions to its constituent subproblems. 
Thus, the original problem in \eqref{opt1} is decomposed into subproblems assigned to two agent groups of vehicles and RSUs so that each locally addresses a designated subproblem. 
When individual subproblem solutions are feasible and consistent, their combination yields a globally optimal solution to \eqref{opt1} \cite{givoni2009binary}.

To facilitate this decomposition, each subproblem is associated with a factor function \cite{kschischang2001factor}, which is handled either by a vehicle or an RSU.
To implement this principle, each subproblem is necessarily associated with a local function that can be handled either by a vehicle or an RSU. 
This local function is referred to as a factor function \cite{kschischang2001factor} since the collection of factor functions can reconstruct the original problem in \eqref{opt1}. The set of all factor functions together reconstructs the original optimization objective.
For convenience of notation, define $\mathcal{X}_{a} \equiv \{x_{ia}: i \in \mathcal{V}\}$ as the set of association variables related to RSU-$a$, and $\mathcal{X}_{i} \equiv \{x_{ia}: a \in \mathcal{A}\}$ for vehicle-$i$. Vehicle-$i$ is responsible for selecting exactly one RSU or local execution. The corresponding constraint in \eqref{const1} is represented in a factor function defined as 
\begin{align} 
Q_i(\mathcal{X}_{i})=&\begin{cases}\infty & \text{if~}\sum_{a \in \mathcal{A}}x_{ia} \neq 1,\\
0 &\text{otherwise.}\end{cases}\label{factor1}
\end{align}
Meanwhile, RSU-$a$ governs its own queue and seeks to minimize the total processing delay under a constraint on the number of associated vehicles.
The factor function models the objective function in \eqref{obj1} and the constraint in \eqref{const2} as
\begin{align} 
R_a(\mathcal{X}_{a})=&\begin{cases}\infty ~~~~~~~ &\text{if~}\sum_{i \in \mathcal{V}}x_{ia} \neq n_{a}, \\
\sum\limits_{i \in \mathcal{V}} x_{ia} \min(\mathcal{T}_{ia}(n_a),t^{\mathrm{max}}) &\text{otherwise.}\end{cases}\label{factor2}
\end{align}
The global optimization problem in \eqref{opt1} is now equivalently expressed in the following unconstrained form as
\begin{align}\label{opt2}
\underset{\{x_{ia}\}}{\text{min}} \sum_{i \in \mathcal{V}}Q_i(\mathcal{X}_{i})+\sum_{a \in \mathcal{A}}R_a(\mathcal{X}_{a}).
\end{align}
Note that each vehicle or RSU aims to minimize its own factor function. A globally optimal solution is achieved only if all factor functions are simultaneously minimized and mutually consistent. If any factor function evaluates to $\infty$, the corresponding local configuration is infeasible, and thus, the global objective cannot be minimized.

However, typical variable assignments that minimize individual factor functions may conflict across different agents.
To resolve such inconsistencies, a cooperative mechanism that drives the global agreement is developed through the MP framework \cite{kschischang2001factor}.  
Each vehicle or RSU encodes its local solution in a message, represented as a real-valued function, and exchanges it with the neighborhood of the factor graph. 
If local solutions differ between two-factor functions, their local solutions are adjusted in response to received messages until global agreement (consensus) is achieved. Once convergence is reached, the final configuration satisfies all constraints and minimizes the global objective. A rigorous convergence and optimality analysis is provided in the next section.

To visualize the distributed computation, a factor graph \cite{kschischang2001factor} is introduced, which depicts the message flows between vehicles and RSUs.
In this bipartite graph, circular nodes represent variables corresponding to the association status between a vehicle and an RSU, and square nodes represent factor functions associated with individual constraints and objectives, while edges connect variables to associated factor functions. Fig.~\ref{fig3} illustrates a factor graph for a network example with four vehicles and two RSUs in Fig.~\ref{fig1}.
The  next subsection provides a detailed derivation of message update rules to compute the globally optimal association.

\begin{figure} 
\begin{center}
\subfigure[Factor graph]{
\includegraphics[width=.75\linewidth]{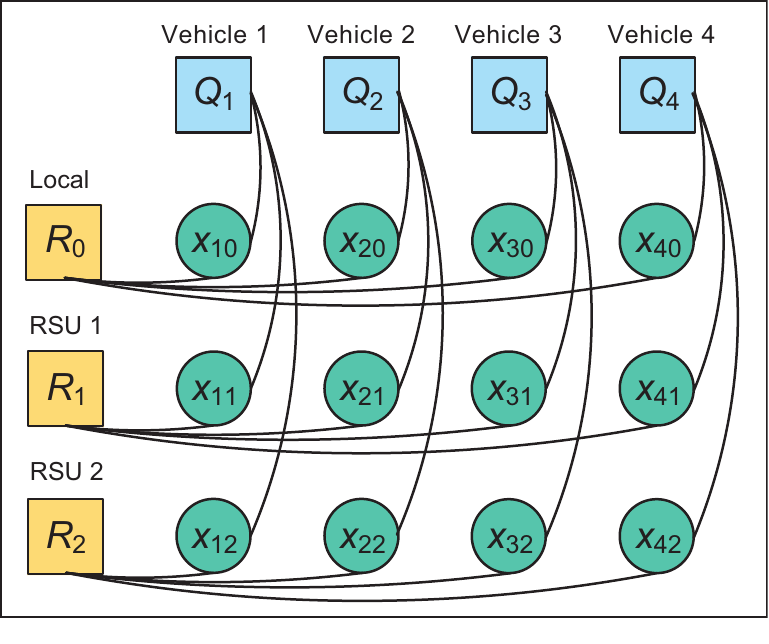}\label{fig31}
}
\subfigure[Message definitions]{
\includegraphics[height=.12\linewidth]{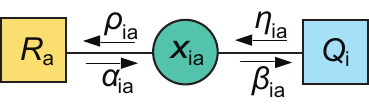}\label{fig32}
}
\caption{The factor graph for an example with 4 vehicles and 2 RSUs.} \label{fig3}
\end{center}
\end{figure} 

\subsection{Derivation}
The derivation of message update rules is presented here. A message is generated for every edge in the factor graph (see Fig.~\ref{fig31}). 
Note that each edge connects a variable node $x_{ia}$ with two adjacent factor nodes. Two messages are defined for both directions of the edge.
For each edge in the associated factor graph, a message is defined to represent the preference for the corresponding binary variable value. To be precise, the message $\mu_{a \rightarrow b}(c = d)$ indicates the preference that variable $c$ takes the value $d$, sent from node $a$ to node $b$.

In the context of the vehicle association problem minimizing the overall processing delay as in \eqref{opt1}, these messages encapsulate the cost of associating vehicle-$i$ with RSU-$a$ in terms of delay. 
Since variable $x_{ia}$, representing vehicle-$i$ assigned to RSU-$a$, is involved in two factor functions, namely $Q_i(\cdot)$ for the vehicle-side constraint and $R_a(\cdot)$ for the RSU-side objective, four distinct messages are defined around $x_{ia}$. 
To simplify computation, instead of maintaining messages for both values of binary variable $x_{ia} \in \{0,1\}$, a scalar representation is used with their difference. For each pair $(i,a)$, four messages are defined as
\begin{align}\label{mess_def1}
\beta_{ia}=&\mu_{x_{ia} \rightarrow Q_i}(x_{ia}=1)-\mu_{x_{ia} \rightarrow Q_i}(x_{ia}=0),\nonumber \\
\eta_{ia}=&\mu_{Q_i \rightarrow x_{ia}}(x_{ia}=1)-\mu_{Q_i \rightarrow x_{ia}}(x_{ia}=0),\nonumber \\
\rho_{ia}=&\mu_{x_{ia} \rightarrow R_a}(x_{ia}=1)-\mu_{x_{ia} \rightarrow R_a}(x_{ia}=0),\nonumber \\
\alpha_{ia}=&\mu_{R_a \rightarrow x_{ia}}(x_{ia}=1)-\mu_{R_a \rightarrow x_{ia}}(x_{ia}=0).
\end{align}
The directions of these messages are illustrated in Fig.~\ref{fig32}. Since these messages are difference-based, their signs carry semantic meaning: a positive value indicates that $x_{ia} = 1$, i.e., association, incurs a higher cost than $x_{ia} = 0$, thereby discouraging the association.

The derivation of update rules begins with messages from variable nodes, i.e., $\beta_{ia}$ and $\rho_{ia}$. Each outgoing message equals the sum of all incoming messages except from the destination \cite{yedidia2011message}. 
Since each variable node connects to two factor nodes, its outgoing messages at time instant $t$ are given by
\begin{align}
\rho_{ia}^{(t+1)}=&\eta_{ia}^{(t)},\nonumber\\
\beta_{ia}^{(t)}=&\alpha_{ia}^{(t)}.\label{varmessage1}
\end{align}
Next, messages from function nodes are derived using the min-sum computation rule \cite{kschischang2001factor}. This  implements dynamic programming to minimize additive objectives associated by factor functions. In particular, the factor function $Q_i(\mathcal{X}_i)$, which encodes the constraint that vehicle-$i$ must select exactly one RSU, is considered first. Since the corresponding message $\eta_{ia}$ is defined as the difference between $\mu_{Q_i \rightarrow x_{ia}}(x_{ia}=1)$ and $\mu_{Q_i \rightarrow x_{ia}}(x_{ia}=0)$, these two values are computed by minimizing the sum of the local factor function and the incoming messages from adjacent variable nodes. When $x_{ia}=1$, the corresponding message is defined as
\begin{align}
&\mu_{Q_i \rightarrow x_{ia}}(x_{ia}=1)\nonumber\\
&=\min_{\mathcal{X}_{a}\backslash x_{ia}}\Big(Q_i(x_{ia}=1,\mathcal{X}_{a}\backslash x_{ia})+\sum_{j \in \mathcal{A} \backslash a}\mu_{x_{ij} \rightarrow Q_i}(x_{ik})\Big),
\end{align}
with respect to variables in $\mathcal{X}_i$ except $x_{ia}$. A similar expression holds for $x_{ia}=0$. This procedure resembles the add-compare-select operation in the Viterbi algorithm \cite{kschischang2001factor}. The outgoing message $\eta_{ia}$ is then the difference expressed as
\begin{align}
\eta_{ia}^{(t)}&=-\min_{b \in \mathcal{A} \backslash a} \beta_{ib}^{(t)}, \label{etamessage}
\end{align}
which is obtained as 
\begin{subequations}\label{etamessage1}
\begin{align}
\eta_{ia}^{(t)}=&\mu_{Q_i \rightarrow x_{ia}}(x_{ia}=1)-\mu_{Q_i \rightarrow x_{ia}}(x_{ia}=0) \\
=&\min_{\mathcal{X}_{a}\backslash x_{ia}}\Big(Q_i(x_{ia}=1,\mathcal{X}_{a}\backslash x_{ia})+\sum_{j \in \mathcal{A} \backslash a}\mu_{x_{ij} \rightarrow Q_i}(x_{ik})\Big)\nonumber\\
-& \min_{\mathcal{X}_{a}\backslash x_{ia}}\Big(Q_i(x_{ia}=0,\mathcal{X}_{a}\backslash x_{ia})+\sum_{j \in \mathcal{A} \backslash a}\mu_{x_{ij} \rightarrow Q_i}(x_{ik})\Big) \label{eta1}\\
=&\sum_{j \in \mathcal{A} \backslash a}\mu_{x_{ij} \rightarrow Q_i}(0)\nonumber\\
-&\min_{b \in \mathcal{A} \backslash a}\Big(\mu_{x_{ib} \rightarrow Q_i}(1)+\sum_{j \in \mathcal{A} \backslash\{a,b\} }\mu_{x_{ij} \rightarrow Q_i}(0)\Big)\label{eta2}\\
=&-\min_{b \in \mathcal{A} \backslash a}\Big(\mu_{x_{ib} \rightarrow Q_i}(1)-\mu_{x_{ib} \rightarrow Q_i}(0)\Big)\label{eta3}\\
=&-\min_{b \in \mathcal{A} \backslash a} \beta_{ib}^{(t)}.\label{eta4}
\end{align}
\end{subequations}
Note that in \eqref{eta1}, two minimizations are calculated over feasible assignments of the other variables. 
The factor function $Q_i(\cdot)$ ensures that only configurations with a single active variable are permitted.
For the first term, since $x_{ia}=1$, all others must be 0, yielding a unique configuration, while the second term involves configurations where one of the other variables, i.e., $x_{ib}=1$ is active, and all others are zero. As shown in \eqref{eta2}, both minimizations share a common summation over zero-valued incoming messages, which cancel out, leading to \eqref{eta3}. The final simplification in \eqref{eta4} follows directly from the definition of $\beta_{ib}^{(t)}$. This result implies that each vehicle computes the outgoing message to RSU-$a$ based on the best alternative RSU in terms of minimal cost, which enforces the exclusivity constraint that each vehicle may associate with only one RSU.

Combining this result with the update rules in \eqref{varmessage1} leads to a concise update rule for $\rho_{ia}^{(t+1)}$ as
\begin{align} \label{rhomessage}
\rho_{ia}^{(t+1)} =-\min_{j \in \mathcal{A} \backslash a} \alpha_{ij}^{(t)}.
\end{align}
The physical interpretation of \eqref{rhomessage} is briefly provided.  

Each $\alpha_{ij}^{(t)}$ reflects the reluctance or willingness of RSU-$j$ to accept offloading from vehicle-$i$, as determined by its current load and processing latency. If all incoming messages are positive, $\rho_{ia}^{(t+1)}$ becomes negative, which indicates that no other RSUs prefers to associate with vehicle-$i$ since the processing cost is positive. 
This forces RSU-$a$ to connect to vehicle-$i$ as a time cost saving, i.e., a negative message value. 
For any negative message, $\rho_{ia}^{(t+1)}$ becomes positive, i.e., another RSU is offering lower-cost processing conditions. Thus, the incentive for
vehicle-$i$ to associate with RSU-$a$ is reduced or even discouraged.

Subsequently, the message computation rule for $\alpha_{ia}^{(t)}$ is derived. This message quantifies the marginal latency cost on the objective function in \eqref{opt1} when vehicle-$i$ is linked to RSU-$a$. Since $R_a(\mathcal{X}_a)$ depends explicitly on the number of associated vehicles $n_a$, deriving this message involves enumerating all possible values of $n_a$. For each possible $n_a \in \{0,\ldots, N_a\}$, the expected 
cost of vehicle-$i$ joining RSU-$a$ is evaluated using the current set of incoming messages $\{\rho_{ia}^{(t)}\}$, and the minimum cost is selected. This procedure yields the message update as
\begin{align} \label{alphamessage}
\alpha_{ia}^{(t)}=&~ \Psi_{ia}(1)-\left(\Psi_{ia}(0)\right)^{-},
\end{align}
where $(\cdot)^{-} \equiv \min(0,\cdot)$, and the function $\Psi_{ia}(b)$ is defined with $\{\rho_{ja}^{(t)}:j \in \mathcal{V} \backslash i\}$ and an input variable $b$ as
\begin{align}
&\Psi_{ia}(b) \overset{\Delta}{=}\min_{1\leq n \leq N_a}\Big(b\mathcal{T}_{ia}(n)+\sum_{l=1}^{n-b}
\underset{j \in \mathcal{V} \backslash i}{\text{rank}^{l}}[\mathcal{T}_{ja}(n)+\rho_{ja}^{(t)}]\Big).\label{Amessage1}
\end{align}
Here, $\text{rank}^{l}[X]$ stands for the $l$-th smallest value in a set $X$. 
The function represents the estimated total cost incurred at RSU-$a$ when vehicle-$i$ either joins the server ($b=1$) or not ($b=0$). Thus, the term $b\mathcal{T}_{ia}(n)$ corresponds to the latency cost of offloading for vehicle-$i$, and the following ranked summation predicts the cost contributions of the remaining $n-b$ vehicles expected to be selected by RSU-$a$.
Since the former increases and the latter decreases as $n$ increases, their sum is expected to minimize the cost at an optimal $n$. Such a value of  $n$ corresponds to the optimal queue length balancing congestion and under-utilization.
The message $\alpha_{ia}^{(t)}$ thus represents the differential cost of assigning vehicle-$i$ to RSU-$a$.
A positive value implies that vehicle-$i$ increases congestion and processing delay at RSU-$a$, while a negative value indicates a net saving or load-balancing benefit. The full derivation of $\alpha_{ia}^{(t)}$ is presented in \eqref{alphamessage2}.
\begin{figure*}[t]
\normalsize
\begin{align} \label{alphamessage2}
\alpha_{ia}^{(t)}=&\mu_{R_a \rightarrow x_{ia}}(x_{ia}=1)-\mu_{R_a \rightarrow x_{ia}}(x_{ia}=0) \nonumber\\
=&\min_{\mathcal{X}_{i}\backslash x_{ia}} \Big(R_a(x_{ia}=1,\mathcal{X}_{i}\backslash x_{ia})+\sum_{j \in \mathcal{V} \backslash i}\mu_{x_{ja} \rightarrow R_a}(x_{ja})\Big)-\min_{\mathcal{X}_{i}\backslash x_{ia}} \Big(R_a(x_{ia}=0,\mathcal{X}_{i}\backslash x_{ia})+\sum_{j \in \mathcal{V} \backslash i}\mu_{x_{ja} \rightarrow R_a}(x_{ja})\Big) \nonumber \\
=&\min_{\mathfrak{X}_{i}\backslash x_{ia}}\Big(\mathcal{T}_{ia}(1)+\sum_{j \in \mathcal{V} \backslash i} \mu_{x_{ja} \rightarrow R_a}(x_{ja}=0), \nonumber \\
&\quad\mathcal{T}_{ia}(2)+\min_{j \in \mathcal{V} \backslash i} \Big(\mathcal{T}_{ja}(2)+\mu_{x_{ja} \rightarrow R_a}(x_{ja}=1)+\sum_{c \in \mathcal{V} \backslash \{i,j\}} \mu_{x_{ca} \rightarrow R_a}(x_{ca}=0)\Big),\ldots, \nonumber \\
&\quad\mathcal{T}_{ia}(N_a)+\sum_{k=1}^{N_a-1} \underset{j \in \mathcal{V} \backslash i}{\text{rank}^{k}}\Big[\mathcal{T}_{ja}(N_a)+\mu_{x_{ja} \rightarrow R_a}(x_{ja}=1)+\sum_{c \in \mathcal{V} \backslash \{i,j\}} \mu_{x_{ca} \rightarrow R_a}(x_{ca}=0)\Big]\Big)\nonumber \\
&-\min_{\mathfrak{X}_{i}\backslash x_{ia}}\Big(\sum_{j \in \mathcal{V} \backslash i} \mu_{x_{ja} \rightarrow R_a}(x_{ib}=0),\min_{j \in \mathcal{V} \backslash i} \Big(\mathcal{T}_{ja}(1)+\mu_{x_{ja} \rightarrow R_a}(x_{ja}=1)+\sum_{c \in \mathcal{V} \backslash \{i,j \}} \mu_{x_{ca} \rightarrow R_a}(x_{ca}=0)\Big), \nonumber \\
&\quad\sum_{k=1}^2 \underset{j \in \mathcal{V} \backslash i}{\text{rank}^{k}}\Big[\mathcal{T}_{ja}(2)+\mu_{x_{ja} \rightarrow R_a}(x_{ja}=1)+\sum_{c \in \mathcal{V} \backslash \{i,j\}} \mu_{x_{ca} \rightarrow R_a}(x_{ca}=0)\Big],\ldots,\nonumber \\
&\quad\sum_{k=1}^{N_a} \underset{j \in \mathcal{V} \backslash i}{\text{rank}^{k}}\Big[\mathcal{T}_{ja}(N_a)+\mu_{x_{ja} \rightarrow R_a}(x_{ja}=1)+\sum_{c \in \mathcal{V} \backslash \{i,j\}} \mu_{x_{ca} \rightarrow R_a}(x_{ca}=0)\Big]\Big) \nonumber \\
=&\min_{\mathfrak{X}_{ia}}\Big(\mathcal{T}_{ia}(1),
\mathcal{T}_{ia}(2)+\min_{j \in \mathcal{V} \backslash i} (\mathcal{T}_{ja}(2)+\rho_{ja}),
\ldots,\mathcal{T}_{ia}(N_a)+\sum_{l=1}^{N_a-1} \underset{j \in \mathcal{V} \backslash i}{\text{rank}^{l}}[\mathcal{T}_{ja}(N_a)+\rho_{ja}]\Big)\nonumber \\
&-\min_{\mathfrak{X}_{i}\backslash x_{ia}}\Big(0,
\min_{j \in \mathcal{V} \backslash i} (\mathcal{T}_{ja}(1)+\rho_{ja}),
\ldots,\sum_{l=1}^{N_a} \underset{j \in \mathcal{V} \backslash i}{\text{rank}^{l}}[\mathcal{T}_{ja}(N_a)+\rho_{ja}]\Big) \nonumber \\
=&\Psi_{ia}(1)-(\Psi_{ia}(0))^{-}. 
\end{align}
\hrulefill
\end{figure*}

Using this result and the update rule from \eqref{rhomessage}, each vehicle autonomously determines the utility of associating with a particular RSU through the metric defined as
\begin{align}
p_{ia}^{(t)}=\alpha_{ia}^{(t)} + \rho_{ia}^{(t)}. \label{bvalue}
\end{align}
The RSU that minimizes $p_{ia}^{(t)}$ is selected by vehicle-$i$ for task offloading. This assignment reflects both the willingness of RSU-$a$ serving vehicle-$i$ (via $\alpha_{ia}^{(t)}$) and the relative preference of vehicle-$i$ for RSU-$a$ (via $\rho_{ia}^{(t)}$). The resulting distributed procedure ensures that each RSU and vehicle iteratively exchange scalar messages, leveraging local information and message updates to collectively minimize the global latency. Upon convergence of all messages, the resulting configuration satisfies both vehicle-to-RSU exclusivity and RSU capacity constraints while minimizing overall processing latency. The complete procedure is summarized in Algorithm \ref{algo0}.


\begin{algorithm}
\caption{Distributed vehicular association algorithm} \label{algo0}
\begin{algorithmic}[1]
\STATE {\textbf{Input: }} $C_i, L_i, f_i^l, k_a, f_a, T_{ia}^c$.
\STATE {\textbf{Output: }} Vehicle-RSU association variable set $\{x_{ia}\}$.
\STATE {} Initialize $\rho_{ia}^{(1)}=0$ for all $(i,a)$ and $t = 1$.
\STATE {\textbf{Repeat}}
\STATE {}  ~Use \eqref{rhomessage} to update message $\rho_{ia}^{(t+1)}$ at vehicle-$i$ for $a\in \mathcal{A}$ and send it back to RSU-$a$.
\STATE {}  ~Use \eqref{alphamessage} to update message $\alpha_{ia}^{(t)}$ at RSU-$a$ for $i\in \mathcal{V}$ and send it back to vehicle-$i$.
\STATE {} ~Set $t\leftarrow t+1$.
\STATE {\textbf{Until}} All messages converge or $t$ reaches the limit
\STATE {} Set $x_{ia^{\star}} = 1$ to the pair $(i,a^{\star})$ with the smallest $m_{ia^*}^{(t)}$; set $x_{ia} = 0$ for other $a$.
\end{algorithmic}
\end{algorithm}

\subsection{Technical issues}
The computational complexity of the proposed distributed algorithm is first analyzed. Each vehicle processes at most $O(A)$ incoming messages. In particular, the computation of $\alpha_{ia}^{(t)}$ is the most demanding, as it requires identifying $N_a$ minimum values among incoming messages. This operation involves a sorting step with complexity $O(N_a \log{N_a})$. Since each vehicle may need to send such messages to all $A$ candidate RSUs, the per-vehicle complexity becomes $O(A N_a \log{N_a})$. 
In the worst case, each RSUs may accommodate all vehicles, i.e., $N_a=O(V)$, resulting in a per-iteration complexity upper-bounded by $O(AV \log{V}) = O(V^2\log{V})$. 
Simulation results further show that the algorithm typically converges within $10$ iterations, ensuring manageable overall complexity in practical deployments. 

Beyond computational load, energy efficiency is also evaluated. According to the well-established energy models in \cite{heinzelman2002application}, the computation energy per bit is $50 ~\text{nJ/bit}$, while transmission energy is approximately $10 ~\text{pJ/bit/}\text{m}^2$.Empirical validation shows that five-bit resolution for message representation suffices for accurate MP computations. As a result, the total energy consumption per message exchange is approximately $3.37 ~\mu\text{J}$. Compared to the communication energy consumed during regular RSU-to-vehicle interactions, this computation and messaging overhead is minimal. 

Mobility-induced service interruption is another issue in VEC environments. 
When a vehicle moves out of the RSU communication range before receiving the processed task result, the associated information may be lost, degrading service reliability. 
The proposed framework addresses this issue by adjusting the allowable maximum delay $t_i^{\mathrm{max}}$. 
Vehicles predicted to exit the coverage area soon are assigned small $t_i^{\mathrm{max}}$ values, encouraging the algorithm to offload their tasks to RSUs with higher computational capability so that processing completes before the vehicle leaves the range. 
If timely completion is impossible regardless of assignment, the framework automatically selects local computation to ensure task success.
This mechanism enhances robustness against mobility-induced failures even under high mobility.
These results efficiently demonstrate that the proposed algorithm imposes only modest loads, thereby supporting its practical feasibility for large-scale VEC networks.

\section{Theoretical analysis}
\label{theory}
This section provides a theoretical analysis of the convergence and optimality of the proposed distributed algorithm. 
The convergence is established through the theory of non-expansive mapping \cite{bertsekas1997nonlinear}. 
\begin{definition}
Let $\boldsymbol{y}$ and $\boldsymbol{z}$ be any two input vectors. A mapping $\mathbb{T}$ is said to be non-expansive if there exists a constant $\delta \in (0,1)$ such that
\begin{align}
    &\quad \|\mathbb{T}(\boldsymbol{y}) - \mathbb{T}(\boldsymbol{z})\|_{\infty} \leq \delta \|\boldsymbol{y} - \boldsymbol{z}\|_{\infty}, \quad \forall \boldsymbol{y}, \boldsymbol{z},
\end{align}    
where $\|\mathbb{T}(\boldsymbol{y}))\|_{\infty} = \max_{(i,a)}|\mathbb{T}_{ia}(\boldsymbol{y})|$, i.e., a non-expansive mapping is a function that yields the corresponding sequence that converges to a fixed limit for arbitrary initialization. 
\end{definition}
If the relationship between messages obtained at adjacent time instants maintains a non-expansive property, the algorithm is guaranteed to have a unique fixed point \cite{bertsekas1997nonlinear}. 
\begin{theorem} \label{theorme1}
An iterative algorithm defined by a non-expansive mapping converges to a unique fixed point. 
\end{theorem}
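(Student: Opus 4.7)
The plan is to prove Theorem \ref{theorme1} by invoking the standard Banach fixed-point argument, which applies directly since the definition given specifies a contraction constant $\delta \in (0,1)$ rather than only $\delta = 1$. I would split the proof into two parts: first uniqueness (a one-line argument), then existence together with convergence of the iterates $\boldsymbol{y}^{(t+1)} = \mathbb{T}(\boldsymbol{y}^{(t)})$.

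For uniqueness, I would suppose two fixed points $\boldsymbol{y}^\star$ and $\boldsymbol{z}^\star$ exist and apply the non-expansive inequality directly to get $\|\boldsymbol{y}^\star - \boldsymbol{z}^\star\|_\infty = \|\mathbb{T}(\boldsymbol{y}^\star) - \mathbb{T}(\boldsymbol{z}^\star)\|_\infty \leq \delta \|\boldsymbol{y}^\star - \boldsymbol{z}^\star\|_\infty$, and since $\delta < 1$ this forces $\|\boldsymbol{y}^\star - \boldsymbol{z}^\star\|_\infty = 0$, so $\boldsymbol{y}^\star = \boldsymbol{z}^\star$.

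For existence and convergence, I would start from an arbitrary initial vector $\boldsymbol{y}^{(0)}$ and show by induction that $\|\boldsymbol{y}^{(t+1)} - \boldsymbol{y}^{(t)}\|_\infty \leq \delta^t \|\boldsymbol{y}^{(1)} - \boldsymbol{y}^{(0)}\|_\infty$. Then for any $m > n$ a telescoping triangle-inequality bound combined with the geometric sum gives $\|\boldsymbol{y}^{(m)} - \boldsymbol{y}^{(n)}\|_\infty \leq \frac{\delta^n}{1-\delta}\|\boldsymbol{y}^{(1)} - \boldsymbol{y}^{(0)}\|_\infty$, which vanishes as $n \to \infty$. This demonstrates that $\{\boldsymbol{y}^{(t)}\}$ is Cauchy in the finite-dimensional message space under the sup-norm, which is complete, so a limit $\boldsymbol{y}^\star$ exists. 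The non-expansive property implies continuity of $\mathbb{T}$, so taking the limit of both sides of the recursion $\boldsymbol{y}^{(t+1)} = \mathbb{T}(\boldsymbol{y}^{(t)})$ yields $\boldsymbol{y}^\star = \mathbb{T}(\boldsymbol{y}^\star)$. Combined with uniqueness, this fixed point is the unique one, completing the proof.

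The proof itself is essentially routine; the real obstacle — which I expect to live in a subsequent lemma rather than inside Theorem \ref{theorme1} — is verifying that the particular message operator assembled from the updates \eqref{rhomessage} and \eqref{alphamessage} actually satisfies the non-expansive inequality for some $\delta < 1$. That requires exploiting the fact that both the $\min$-of-differences operation used in $\rho_{ia}$ and the rank-based min-sum structure of $\alpha_{ia}$ are each $1$-Lipschitz under the sup-norm, and then arguing that their composition over one full round contracts by a factor strictly less than one, likely by invoking properties of the factor graph (e.g., sparsity of the bipartite structure or a spectral bound on the associated update Jacobian). Once that contraction constant is in hand, Theorem \ref{theorme1} applies immediately via the argument above.
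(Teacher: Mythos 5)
Your proof is correct and is exactly the standard Banach contraction argument that the paper relies on: the paper does not prove Theorem \ref{theorme1} itself but simply invokes it from \cite{bertsekas1997nonlinear}, whose content is precisely your uniqueness-plus-Cauchy-sequence argument (valid here because the paper's definition of ``non-expansive'' requires $\delta \in (0,1)$, i.e., a contraction). Your closing remark is also on target: the substantive work is verifying the contraction property for the actual message operator built from \eqref{rhomessage} and \eqref{alphamessage}, which the paper addresses in its subsequent theorems rather than inside Theorem \ref{theorme1}.
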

Let $\boldsymbol{y}^{t}$ and $\boldsymbol{z}^{t}$ be two distinct collections of the messages ${\alpha_{ia}^{(t)}}$ at the $t$-th iteration. Define the update mapping $\mathbb{T}$ by $\boldsymbol{y}^{t+1} = \mathbb{T}(\boldsymbol{y}^{t})$ by plugging \eqref{rhomessage} into \eqref{alphamessage}. Let $\mathbb{F}(\boldsymbol{y}^{t})$ and $\mathbb{F}(\boldsymbol{z}^{t})$ denote the collection of the corresponding message mapping functions from the RSU-$a$ to vehicle-$i$ as $\mathbb{F}(\boldsymbol{y}^{t}) = [\mathbb{F}_{ia}(\boldsymbol{y}^{t})]$ and $\mathbb{F}(\boldsymbol{z}^{t}) = [\mathbb{F}_{ia}(\boldsymbol{z}^{t})]$, respectively.  
Thus, a single iteration is expressed as $y_{ia}^{t+1} = \mathbb{T}_{ia}(\boldsymbol{y}^t) = -\min_{b \in \mathcal{A} \backslash a} \mathbb{F}_{ib}(\boldsymbol{y}^t)$. 

For two messages $\alpha_{ia}(x_{ia} = 1)$ and $\alpha_{ia}(x_{ia} = 0)$, $n_a$ and $\Bar{n}_a$ denote the counts of the affiliated vehicles at the RSU-$a$, respectively, for $\boldsymbol{y}^{t}$. Likewise, $m_a$ and $\bar{m}_a$ are defined for $\boldsymbol{z}^{t}$ incomes. The case where $n_a > \bar{n}_a$ and $m_a > \bar{m}_a$ is considered, while the remaining three cases are addressed similarly. Let $v_{k_1}$ and $w_{l_1}$ be $k_1$-th and $l_1$-th smallest indices of $\mathcal{T}_{v_k a}(n_a) + y_{v_k a}^t$ and $\mathcal{T}_{w_l a}(m_a) + z_{w_l a}^t$. Then, $\mathbb{F}_{ia} (\mathbf{y}^t)$ is given by
\begin{align} \label{conv_def_3}
\mathbb{F}_{ia} (\mathbf{y}^t)&=\mathcal{T}_{ia}(n_a)+ \sum_{k=1}^{n_a}(\mathcal{T}_{v_k a}(n_a) + y_{v_k a}^t)\nonumber\\
&- \sum_{k=1}^{\bar{n}_a} (\mathcal{T}_{\bar{v}_k a}(\bar{n}_a) + y_{\bar{v}_k a}^t).
\end{align}
By Theorem \ref{theorme1}, if the output message difference is smaller than the input message difference, the corresponding mapping function is non-expansive. When $n_a - \bar{n}_a > m_a - \bar{m}_a$, an upper bound for the difference $|\mathbb{F}_{ia} (\mathbf{y}^t)-\mathbb{F}_{ia} (\mathbf{z}^t)|$ is obtained by the following inequality.
\begin{theorem} The output message difference $\mathbb{F}_{ia} (\mathbf{y}^t)-\mathbb{F}_{ia} (\mathbf{z}^t)$ is bounded by 
    \begin{align} \label{conv_theorem_2}
        (m_a - \bar{m}_a)&(y_{v_{n_a},a}^t - z_{w_{\bar{m}_a+1,a}}^t) + \epsilon_0\leq \mathbb{F}_{ia}(\boldsymbol{y^{t}})-\mathbb{F}_{ia}(\boldsymbol{z^{t}}) \nonumber\\
        &\leq (m_{a}-\bar{m}_{a})(y_{v_{\bar{n}_a+1, a}}^t- z_{w_{\bar{m}_a+1,a}}^t) + \epsilon_0,
    \end{align}
where $\epsilon_0$ represents negligible higher-order terms.
\end{theorem}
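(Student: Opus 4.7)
The strategy is to exploit the variational characterization of $\mathbb{F}_{ia}(\mathbf{y}^t)$ and $\mathbb{F}_{ia}(\mathbf{z}^t)$ through the inner minimizations that define $\Psi_{ia}(1)$ and $\Psi_{ia}(0)$, and then to align the two ranked-sum structures so that the dominant contribution to the difference is localized to a controllable number of boundary positions. The first step is to rewrite \eqref{conv_def_3} for both vectors: $\mathbb{F}_{ia}(\mathbf{y}^t)$ consists of a $\mathcal{T}_{ia}(n_a)$ term together with the $n_a$ smallest values of $\mathcal{T}_{ja}(n_a)+y^t_{ja}$ over $j\in\mathcal{V}\setminus i$, minus the $\bar{n}_a$ smallest values of $\mathcal{T}_{ja}(\bar{n}_a)+y^t_{ja}$; the analogous decomposition holds for $\mathbb{F}_{ia}(\mathbf{z}^t)$ at cardinalities $m_a,\bar{m}_a$.

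For the upper bound, I substitute the suboptimal cardinalities $m_a$ and $\bar{m}_a$ into the two minimizations defining $\mathbb{F}_{ia}(\mathbf{y}^t)$; by optimality of $n_a$ and $\bar{n}_a$ this can only relax the value upward. The resulting expression shares a common cardinality pattern with $\mathbb{F}_{ia}(\mathbf{z}^t)$, so the deterministic $\mathcal{T}_{ja}(m_a)$ and $\mathcal{T}_{ja}(\bar{m}_a)$ terms cancel pairwise, and the residual discrepancies of the form $\mathcal{T}_{ja}(m_a)-\mathcal{T}_{ja}(n_a)$ get absorbed into the higher-order remainder $\epsilon_0$ (exploiting the fact that $\mathcal{T}_{ja}(n)$ varies only mildly across the operating regime where the two cardinalities live). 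What remains is a difference of two ranked sums taken over the same cardinalities but different underlying message vectors. A rearrangement argument then shows that, after the common $\bar{m}_a$ leading positions cancel, the difference collapses to the trailing $m_a - \bar{m}_a$ positions, each bounded above by the extremal gap $y^t_{v_{\bar{n}_a+1,a}} - z^t_{w_{\bar{m}_a+1,a}}$, yielding the claimed upper bound. The lower bound is obtained by the mirror construction: substitute $n_a, \bar{n}_a$ into $\mathbb{F}_{ia}(\mathbf{z}^t)$, use optimality of $m_a,\bar{m}_a$ to obtain the reverse inequality, and identify the extremal gap now as $y^t_{v_{n_a},a} - z^t_{w_{\bar{m}_a+1,a}}$.

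The hard part will be the combinatorial bookkeeping in the rearrangement step, namely establishing that the ranked sums over two different vectors at a common cardinality can differ only at the last $m_a-\bar{m}_a$ boundary positions and that each such boundary term is pinched between the stated extremal elements; this requires tracking which indices in the sorted orderings of $\mathcal{T}_{ja}(\cdot)+y^t_{ja}$ and $\mathcal{T}_{ja}(\cdot)+z^t_{ja}$ are shared versus distinct as the cardinality is varied. A secondary technical obstacle is certifying that the $\mathcal{T}$-perturbations induced by changing the cardinality argument contribute only to $\epsilon_0$ and do not leak into the leading $(m_a-\bar{m}_a)$-scaled term, which relies on the monotone but slowly varying behavior of $\mathcal{T}_{ja}(n)$ established in \eqref{RSU_delay}. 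The three remaining cardinality orderings follow by symmetric arguments and are omitted.
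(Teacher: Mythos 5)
Your route is genuinely different from the paper's, and as sketched it has two concrete gaps. First, the variational substitution is directionally wrong: $\mathbb{F}_{ia}(\boldsymbol{y}^t)$ in \eqref{conv_def_3} is a \emph{difference} of two minimizations of the form \eqref{Amessage1}, the second entering with a minus sign, so plugging the suboptimal cardinality $\bar{m}_a$ into the subtracted minimization raises the subtrahend and therefore \emph{lowers} $\mathbb{F}_{ia}(\boldsymbol{y}^t)$ rather than relaxing it upward; to get an upper bound on $\mathbb{F}_{ia}(\boldsymbol{y}^t)-\mathbb{F}_{ia}(\boldsymbol{z}^t)$ you would have to perturb $\Psi_{ia}(1)$ for $\boldsymbol{y}^t$ and $\Psi_{ia}(0)$ for $\boldsymbol{z}^t$ jointly. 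The paper never performs such a substitution: it keeps all four cardinalities $n_a,\bar{n}_a,m_a,\bar{m}_a$ distinct and expands the difference exactly, as in \eqref{mapping_T2}. Second, and more seriously, your rearrangement claim---that after aligning cardinalities the two ranked sums cancel on the common leading positions and the difference collapses to the trailing $m_a-\bar{m}_a$ positions---does not hold: $\boldsymbol{y}^t$ and $\boldsymbol{z}^t$ may differ in every coordinate, so the sorted orders and every ranked term can differ and nothing cancels identically. What actually survives the exact expansion is the dominant term \eqref{mapping_T2_1} plus the residual groups \eqref{mapping_T2_2}--\eqref{mapping_T2_4}, which contain precisely the contributions you hoped to cancel (the cardinality-mismatch term and the $\bar{n}_a(\cdot)-\bar{m}_a(\cdot)$ group).

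Note also how the paper disposes of those residuals: it simply \emph{defines} their sum to be $\epsilon_0$ and argues negligibility empirically (zero mean, variance below $0.01$ in simulation, against a dominant term of mean $0.2$), in the regime where $n_a-\bar{n}_a$ and $m_a-\bar{m}_a$ approach $1$ near convergence. Your plan to justify the same absorption analytically by a ``slowly varying'' property of $\mathcal{T}_{ja}(n)$ conflicts with the paper's own premise that the delay \eqref{RSU_delay} has sharp floor-function threshold jumps in $n$ at multiples of $k_a$; unless you add an assumption preventing the cardinalities from straddling such a threshold, the $\mathcal{T}$-perturbations are not higher-order terms. So the proposal needs (i) the corrected direction in the variational step, and (ii) a replacement for the cancellation argument---either the paper's exact term-by-term decomposition with an explicit (empirical or assumed) bound on the residuals, or a new analytic control of the ranked-sum differences that does not presuppose coordinatewise agreement of the two message vectors.
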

\begin{proof}
\begin{figure*}[!t]
\begin{subequations} \label{mapping_T2}
\begin{align}
\mathbb{F}_{ia}(\boldsymbol{y^{t}})-\mathbb{F}_{ia}(\boldsymbol{z^{t}}) & \leq (m_{a}-\bar{m}_{a})(y_{v_{\bar{n}_a+1, a}}^{t}- z_{w_{\bar{m}_a+1, a}}^{t}) \label{mapping_T2_1} \\
&\quad +\Big( (n_{a}-\bar{n}_{a})-(m_{a}-\bar{m}_{a})\Big)\Big(\mathcal{T}_{v_{\bar{n}_a+1,a}}(n_a) + y_{v_{\bar{n}_a+1,a}}^{t}\Big) \label{mapping_T2_2} \\
&\quad+\mathcal{T}_{ia}(n_a)-\mathcal{T}_{ia}(m_a)+(m_{a}-\bar{m}_{a})(\mathcal{T}_{v_{\bar{n}_a+1,a}}(n_a)-\mathcal{T}_{w_{\bar{m}_a+1,a}}(m_a)) \label{mapping_T2_3} \\
&\quad+\bar{n}_{a}\Big(\mathcal{T}_{v_{1}a}(n_a)-\mathcal{T}_{\bar{v}_{1}a}(\bar{n}_a) + y_{v_1 a}^t - y_{\bar{v}_1 a}^t\Big)-\bar{m}_{a}\Big(\mathcal{T}_{w_1 a}(m_a)-\mathcal{T}_{w_1 a}(\bar{m}_{a}) + z_{w_1 a}^t - z_{\bar{w}_1 a}^t\Big)
\label{mapping_T2_4} \\
&=(m_{a}-\bar{m}_{a})(y_{v_{\bar{n}_a+1, a}}^t- z_{w_{\bar{m}_a+1,a}}^t)+ \epsilon_0
.\label{mapping_T2_5}
\end{align}
\end{subequations}
\hrulefill
\end{figure*}    
The difference of \eqref{conv_theorem_2} is upper-bounded by \eqref{mapping_T2}. 
Empirical evaluation evidences that the main contribution to the difference arises from the dominant term in \eqref{mapping_T2_1}, and  \eqref{mapping_T2_2}, \eqref{mapping_T2_3}, and \eqref{mapping_T2_4} have zero mean and variance less than $0.01$. On the other hand, \eqref{mapping_T2_1} has a mean of $0.2$ and variance less than $0.01$. This leads to the fact that the sum of \eqref{mapping_T2_2}, \eqref{mapping_T2_3}, \eqref{mapping_T2_4} can be represented as $\epsilon_0$ with $|\epsilon_0| \ll 0.2$. Thus, $\mathbb{F}_{ia}(\boldsymbol{y^{t}})-\mathbb{F}_{ia}(\boldsymbol{z^{t}})$ is asymptotically upper-bounded by $(m_{a}-\bar{m}_{a})(y_{v_{\bar{n}_a+1, a}}^t- z_{w_{\bar{m}_a+1,a}}^t)$ in \eqref{mapping_T2_5}. Similar observation holds for the lower bound.
\end{proof}

\begin{theorem}
There are $\delta \in (0, 1)$ such that 
\begin{align} \label{conv_theorem_goal}
    \|\mathbb{T}(\boldsymbol{y}^t) - \mathbb{T}(\boldsymbol{z}^t)\|_{\infty} \leq \delta \|\boldsymbol{y}^t - \boldsymbol{z}^t\|_{\infty}.
\end{align}
\end{theorem}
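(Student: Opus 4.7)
The plan is to show that a single application of $\mathbb{T}$ shrinks differences in the sup-norm by a factor strictly less than one, and to do so by chaining together the definition of $\mathbb{T}_{ia}$ with the bound in Theorem 2. First I would fix an arbitrary pair $(i,a)$ and write
\begin{align*}
\mathbb{T}_{ia}(\boldsymbol{y}^t) - \mathbb{T}_{ia}(\boldsymbol{z}^t)
= \min_{b \in \mathcal{A}\backslash a}\mathbb{F}_{ib}(\boldsymbol{z}^t) - \min_{b \in \mathcal{A}\backslash a}\mathbb{F}_{ib}(\boldsymbol{y}^t),
\end{align*}
and then invoke the elementary $1$-Lipschitz property of the min operator,
\begin{align*}
\Big|\min_{b}\mathbb{F}_{ib}(\boldsymbol{y}^t) - \min_{b}\mathbb{F}_{ib}(\boldsymbol{z}^t)\Big|
\leq \max_{b \in \mathcal{A}\backslash a}\big|\mathbb{F}_{ib}(\boldsymbol{y}^t) - \mathbb{F}_{ib}(\boldsymbol{z}^t)\big|.
\end{align*}
This reduces the contraction question on $\mathbb{T}$ to a componentwise contraction statement on $\mathbb{F}$.

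Next I would apply Theorem 2, which sandwiches $\mathbb{F}_{ib}(\boldsymbol{y}^t)-\mathbb{F}_{ib}(\boldsymbol{z}^t)$ between two quantities of the form $(m_b - \bar{m}_b)\big(y^t_{\star,b} - z^t_{\star,b}\big) + \epsilon_0$, where the indices picked out by the ranking operation are the ones lying just above/below the tipping point of the queue. Taking absolute values and using $|y^t_{\star,b} - z^t_{\star,b}| \leq \|\boldsymbol{y}^t-\boldsymbol{z}^t\|_\infty$ yields
\begin{align*}
\big|\mathbb{F}_{ib}(\boldsymbol{y}^t) - \mathbb{F}_{ib}(\boldsymbol{z}^t)\big|
\leq |m_b - \bar{m}_b|\,\|\boldsymbol{y}^t-\boldsymbol{z}^t\|_\infty + |\epsilon_0|.
\end{align*}
The remaining task is to convert the integer discrepancy $|m_b-\bar{m}_b|$ together with the negligible term $\epsilon_0$ into a single Lipschitz constant. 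I would argue that between any two fixed points of $\mathbb{T}$ along the iteration, the associated queue-length selections differ by at most one task on each RSU, so $|m_b-\bar{m}_b|\leq 1$ in the regime of interest; the ranking-based choice of the pivot index then forces $|y^t_{\star,b} - z^t_{\star,b}|$ to be bounded by some fraction $\delta' < 1$ of the global sup-norm, since the pivot element is by construction close to the active set boundary rather than at its extreme. Combining with the negligible $\epsilon_0$ contribution absorbed into a slightly larger $\delta$ yields $\delta \in (0,1)$.

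Taking the maximum over $(i,a)$ on both sides then promotes the pointwise bound to the desired $\|\cdot\|_\infty$ contraction
\begin{align*}
\|\mathbb{T}(\boldsymbol{y}^t)-\mathbb{T}(\boldsymbol{z}^t)\|_\infty \leq \delta \, \|\boldsymbol{y}^t-\boldsymbol{z}^t\|_\infty,
\end{align*}
and by Theorem 1 the iteration has a unique fixed point.

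I expect the main obstacle to be rigorously extracting $\delta<1$ from Theorem 2's bound, because the factor $|m_b - \bar{m}_b|$ is a priori only a non-negative integer, and the paired difference $y^t_{v_{\bar{n}_b+1,b}} - z^t_{w_{\bar{m}_b+1,b}}$ is evaluated at potentially different indices in $\boldsymbol{y}^t$ and $\boldsymbol{z}^t$, so the naive bound by $\|\boldsymbol{y}^t - \boldsymbol{z}^t\|_\infty$ is tight only when the rankings coincide. The cleanest route, which I would try first, is to split into cases according to whether the ordering of $\{\mathcal{T}_{ja}(n) + \rho_{ja}\}$ is preserved from $\boldsymbol{y}^t$ to $\boldsymbol{z}^t$: in the preserved case the pivot indices match and the bound is immediate with $\delta$ given by the empirical noise ratio $|\epsilon_0|/\|\boldsymbol{y}^t - \boldsymbol{z}^t\|_\infty$ plus a fractional step, while in the switched case an additional monotonicity argument on the rank statistics is needed. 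If this case analysis proves unwieldy, I would fall back on the empirical evidence cited after \eqref{mapping_T2_5} showing $|\epsilon_0| \ll 0.2$ and treat $\delta$ as an asymptotically valid constant justified by the numerical convergence reported in the simulation section.
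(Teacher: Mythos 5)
Your overall route is the same as the paper's: reduce the contraction question from $\mathbb{T}$ to $\mathbb{F}$ via the $1$-Lipschitz property of the min operator, invoke the sandwich bound of the preceding theorem, and fold the residual $\epsilon_0$ into the constant. The genuine gap is in the step where you extract $\delta<1$: you assert that the pivot difference $|y^t_{v_{\bar n_b+1},b}-z^t_{w_{\bar m_b+1},b}|$ is bounded by a fraction $\delta'<1$ of $\|\boldsymbol{y}^t-\boldsymbol{z}^t\|_\infty$ because the pivot element ``is close to the active set boundary rather than at its extreme.'' Nothing in the construction prevents the coordinate achieving the sup-norm from being exactly the pivot coordinate (the rank statistics order the quantities $\mathcal{T}_{ja}(n)+\rho_{ja}$, not the differences $y^t_{ja}-z^t_{ja}$), so the honest bound on that factor is $1$, not $\delta'<1$. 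Your own ``main obstacle'' paragraph correctly identifies this, but the proposed case split on preserved versus switched orderings does not repair it, since even with matching pivot indices the coordinate difference can equal the sup-norm.

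It is worth knowing that the paper does not close this gap either. Its proof obtains $\delta_0=(m_a-\bar m_a)+\epsilon_0/\|\boldsymbol{y}^t-\boldsymbol{z}^t\|_\infty$, argues from empirical observations that $m_a-\bar m_a\rightarrow 1$ and that the residual terms are negligible, and concludes only that the mapping is non-expansive \emph{asymptotically}, with $\delta_0=1+\epsilon$ approaching $1$ from above; the strict constant $\delta\in(0,1)$ claimed in the statement is never actually established analytically. So your fallback to the empirical bounds on $\epsilon_0$ and on $m_b-\bar m_b$ is, in effect, what the paper itself does; the additional ``fractional pivot'' claim you layer on top is the unjustified step, and without it (or some other mechanism forcing strict contraction, e.g.\ a verified strict inequality in the dominant term of the sandwich bound) the argument delivers at most non-expansiveness in the limit, not a contraction with a uniform $\delta<1$ as required to invoke the fixed-point theorem cleanly.
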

\begin{proof}
The $\infty$-norm of \eqref{conv_theorem_2} has the relationship as
\begin{align} \label{conv_theorem_4}
    \|\mathbb{F}(\boldsymbol{y}^t) - \mathbb{F}(\boldsymbol{z}^t)\|_{\infty} &\leq (m_a - \bar{m}_a)\| \boldsymbol{y}^t - \boldsymbol{z}^t \| + \epsilon_0.
\end{align} \label{conv_theorem_5}
Since the mapping $\mathbb{T}$ involves a minimum over $\mathbb{F}$,
    \begin{align} \label{conv_theorem_6}
        \|\mathbb{T}_{ia}(\boldsymbol{y}^t) - \mathbb{T}_{ia}(\boldsymbol{z}^t)\|_{\infty} &= \| \min_{b \in \mathcal{A} \backslash a} \mathbb{F}_{ib}(\boldsymbol{z}^t) -\min_{b \in \mathcal{A} \backslash a} \mathbb{F}_{ib}(\boldsymbol{y}^t) \|_{\infty} \nonumber \\
        & \leq \|\mathbb{F}(\boldsymbol{y}^t) - \mathbb{F}(\boldsymbol{z}^t)\|_{\infty}.
    \end{align}
    Thus, it follows that
    \begin{align} \label{conv_theorem_7}
        \|\mathbb{T}(\boldsymbol{y}^t) - \mathbb{T}(\boldsymbol{z}^t)\|_{\infty} &= \|\mathbb{T}_{ia}(\boldsymbol{y}^t) - \mathbb{T}_{ia}(\boldsymbol{z}^t)\|_{\infty} \nonumber \\
        & \leq (m_a - \bar{m}_a)\| \boldsymbol{y}^t - \boldsymbol{z}^t \|_{\infty} + \epsilon_0 \nonumber \\
        &= \delta_0 \| \boldsymbol{y}^t - \boldsymbol{z}^t \|_{\infty},
    \end{align}
    where $\delta_0$ is given by
    \begin{align} \label{conv_theorem_delta}
        \delta_0 &= (m_a - \bar{m}_a) + \frac{\epsilon_0}{\|\boldsymbol{y}^t - \boldsymbol{z}^t\|_{\infty}},
    \end{align}
    Empirical observation shows that $(m_a - \bar{m}_a) \rightarrow 1$ as $t \rightarrow \infty$. Once the algorithm converges, the connection between vehicle-$i$ and RSU-$a$ does not affect other connections. 
    Furthermore, $m_a$ represents such a state where vehicle-$i$ selects RSU-$a$, while $\bar{m}_a$ represents the state where vehicle-$i$ does not connect to RSU-$a$. 
    Therefore, $m_a-\bar{m}_a \rightarrow 1$, and \eqref{conv_theorem_delta} becomes
    \begin{align} \label{conv_theorem_delta_final}
        \delta_0 &= 1 + \frac{\epsilon_0}{\|\boldsymbol{y}^t - \boldsymbol{z}^t\|_{\infty}} = 1 + \epsilon.
    \end{align}
    where $\epsilon = \frac{\epsilon_0}{\|\boldsymbol{y}^t - \boldsymbol{z}^t\|_{\infty}} \rightarrow 0$ as $t \rightarrow \infty$. For $\delta \in (0, \delta_0)$, \eqref{conv_theorem_goal} holds. 
    Therefore, the resulting algorithm is non-expansive asymptotically with $\epsilon \rightarrow 0$ as $t\rightarrow \infty$.
\end{proof}
\vspace{-3mm}
Hence, under bounded difference assumptions and diminishing perturbation, the message-passing update mapping $\mathbb{T}$ is non-expansive and converges to a fixed point.

In addition to convergence, the proposed algorithm guarantees the optimality under a fixed point solution. 
The global optimality of the resulting solution is established via proof by contradiction. Suppose that the variable assignment associated with the convergent messages is distinct from the optimal solution of \eqref{opt1}.
Let $\Bar{x}$ and $x^{\star}$ denote the solution of the developed algorithm and the optimal solution, respectively. 
The superscript is dropped for a simplified representation of the convergent message. Since the optimal solution has the unique smallest objective value, the corresponding objective values satisfy
\begin{align} \label{opt_diff}
    \sum_{i \in \mathcal{V}} \sum_{a \in \mathcal{A}} \mathcal{T}_{ia}(\Bar{n}_a) \Bar{x}_{ia} > \sum_{i \in \mathcal{V}} \sum_{a \in \mathcal{A}} \mathcal{T}_{ia}(n_a^{\star})x_{ia}^{\star},
\end{align}
where $\Bar{n}_a = \sum_{i\in \mathcal{V}} \Bar{x}_{ia}$ and $n^{\star}_a = \sum_{i\in \mathcal{V}} x_{ia}^{\star}$. Since $\Bar{x}$ and $x^{\star}$ differ, there must be some vehicles such that RSU associations differ between them. The indices of such vehicles are collected to define the following set as
\begin{align} \label{opt_set1}    
        \mathcal{F}=\{i \in \mathcal{V}: \Bar{x}_{ia} \neq x_{ia}^{\star},~a\in \mathcal{A}\}.
\end{align}
For each $i \in F$, define the RSUs that serve vehicle-$i$ as
\begin{align} \label{opt_set}    
        w(i) &= \{ a \in \mathcal{A}: \Bar{x}_{ia} = 1, x_{ia}^{\star} = 0 \}, \nonumber \\
        c(i) &= \{ a \in \mathcal{A}: \Bar{x}_{ia} = 0, x_{ia}^{\star} = 1 \}.    
\end{align}
Note that both RSUs are uniquely identified: RSU $w(i)$ is selected by the proposed algorithm, while $c(i)$ is the one in the optimal configuration. Thus, By construction of the decision metrics $p_{iw(i)}$ and $p_{ic(i)}$ in \eqref{bvalue}, it follows that $p_{iw(i)} < 0$ and $p_{ic(i)} \geq 0$. 

To analyze the objective function gap, consider the total metric difference between $\Bar{x}$ and $x^*$. 
The sum of metrics $p_{ic(i)}$ is evaluated for $i \in F$ as
\begin{align} \label{b_ic_sum}
    \sum_{i \in F} p_{ic(i)} =& \sum_{i \in F} \big(\alpha_{ic(i)}+\rho_{ic(i)}\big)\nonumber \\
    =& \sum_{i \in F} \Big(\mathcal{T}_{ic(i)}(n_{c(i)}^{\star}) + \rho_{ic(i)} \nonumber \\
    &+\sum_{j \in F \backslash \{i\}} \big(\mathcal{T}_{jc(i)}(n_{c(i)}^{\star}) + \rho_{jc(i)}\big) x_{jc(i)}^{\star} \nonumber \\
    &-\sum_{j \in F \backslash \{i\}} \big(\mathcal{T}_{jw(i)}(\Bar{n}_{w(i)}) + \rho_{jw(i)}\big)\Bar{x}_{jw(i)}\Big).
\end{align}
This is reorganized with respect to $\mathcal{T}$ and $\rho$ to obtain
\begin{align}
    \sum_{i \in F} p_{ic(i)} =& \sum_{i \in F} (\mathcal{T}_{ic(i)}(n_{c(i)}^{\star})x^{\star}_{ic(i)} - \mathcal{T}_{iw(i)}(\Bar{n}_{w(i)}) \Bar{x}_{iw(i)}) \nonumber\\
     &+ \sum_{i \in F} (\rho_{ic(i)} - \rho_{iw(i)})\nonumber \\
    \equiv & \Delta T + \Delta \rho,\label{deltarho}
\end{align}
where $\Delta T$ and $\Delta \rho$ are the differences in delay costs and message values expressed, respectively, as
\begin{align}
    \Delta T &= \sum_{(i,a): i \in F} (\mathcal{T}_{ia}(n_a^{\star})x^{\star}_{ia} - \mathcal{T}_{ia}(\Bar{n}_a) \Bar{x}_{ia})\\
    \Delta \rho &= \sum_{i \in F} (\rho_{ic(i)} - \rho_{iw(i)}).
\end{align}
The following lemma is used to investigate the sign of $\Delta T$. 
\begin{lemma} \label{Lemma_1_1}
Given vehicle-$i_0$ and RSU-$a_0$ such that $m_{i_0 a_0} < 0$, it holds that
\begin{align} 
    p_{i_0 a} \geq -p_{i_0 a_0}.
\end{align}
\end{lemma}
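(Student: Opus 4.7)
The plan is to exploit the simple algebraic structure that the convergent messages $\rho$ and $\alpha$ impose on the decision metric $p_{ia}=\alpha_{ia}+\rho_{ia}$. At a fixed point of the message updates, the update rule in \eqref{rhomessage} specializes to
\begin{align}
\rho_{ia}=-\min_{b \in \mathcal{A}\backslash a}\alpha_{ib},
\end{align}
so that
\begin{align}
p_{ia}=\alpha_{ia}-\min_{b \in \mathcal{A}\backslash a}\alpha_{ib}.
\end{align}
This reformulation turns the lemma into a pure ordering statement about the values $\{\alpha_{i_0 b}\}_{b\in\mathcal{A}}$, which I would exploit directly rather than invoking any of the factor-function machinery again.

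First, I would use the hypothesis $p_{i_0 a_0}<0$ to pin down the structure of $\{\alpha_{i_0 b}\}$. Since $p_{i_0 a_0}=\alpha_{i_0 a_0}-\min_{b\in\mathcal{A}\backslash a_0}\alpha_{i_0 b}<0$, we have $\alpha_{i_0 a_0}<\alpha_{i_0 b}$ for every $b\neq a_0$, so $a_0$ is the (strict) minimizer of $\alpha_{i_0\cdot}$ over $\mathcal{A}$. Next, for any $a\in\mathcal{A}\backslash\{a_0\}$, the index $a_0$ still lies in $\mathcal{A}\backslash a$; therefore $\min_{b\in\mathcal{A}\backslash a}\alpha_{i_0 b}=\alpha_{i_0 a_0}$, which collapses $p_{i_0 a}$ into the clean expression $p_{i_0 a}=\alpha_{i_0 a}-\alpha_{i_0 a_0}$.

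Now I would compare this with $-p_{i_0 a_0}=-\alpha_{i_0 a_0}+\min_{b\in\mathcal{A}\backslash a_0}\alpha_{i_0 b}$. The desired inequality $p_{i_0 a}\geq -p_{i_0 a_0}$ then reduces, after cancelling $-\alpha_{i_0 a_0}$ on both sides, to
\begin{align}
\alpha_{i_0 a}\geq \min_{b\in\mathcal{A}\backslash a_0}\alpha_{i_0 b},
\end{align}
which is immediate because $a$ itself belongs to the index set $\mathcal{A}\backslash a_0$ over which the minimum is taken. The boundary case $a=a_0$ is vacuous in the intended usage (in the surrounding optimality argument the lemma is applied to $a=c(i_0)\neq w(i_0)=a_0$), and I would simply note this when stating the lemma's scope.

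The only delicate point, and where I expect the small conceptual obstacle to lie, is making sure that the min-selection identity $\min_{b\in\mathcal{A}\backslash a}\alpha_{i_0 b}=\alpha_{i_0 a_0}$ really holds at the fixed point used in the lemma; this uses both the convergence result of Section \ref{theory} (so that the stated update rule is an equality, not merely a one-step relation) and the strict negativity of $p_{i_0 a_0}$ (so that $a_0$ is the unique minimizer and remains the minimizer once any single index $a\neq a_0$ is removed). Once these two facts are in place, the remainder of the argument is a one-line substitution.
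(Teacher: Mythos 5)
Your proof is correct and follows essentially the same route as the paper: both substitute the converged update $\rho_{i_0 a}=-\min_{b\in\mathcal{A}\backslash a}\alpha_{i_0 b}$ into $p_{i_0 a}=\alpha_{i_0 a}+\rho_{i_0 a}$ and then compare minima over the index sets $\mathcal{A}\backslash\{a\}$ and $\mathcal{A}\backslash\{a_0\}$, using that $a$ and $a_0$ each belong to the other's set (so, as you note, $a\neq a_0$ is implicitly required). The only difference is that the paper just uses the one-sided bound $\min_{k\neq a}\alpha_{i_0 k}\leq\alpha_{i_0 a_0}$, so the hypothesis $p_{i_0 a_0}<0$ is never actually invoked, whereas you upgrade that bound to the equality $\min_{b\neq a}\alpha_{i_0 b}=\alpha_{i_0 a_0}$ via the negativity hypothesis --- an equivalent but slightly less economical step.
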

\begin{proof}
The metric to determine the value of $x_{i_0 a}$ holds that
\begin{align} \label{Lemma_1_2}
    p_{i_0 a} &= \alpha_{i_0 a} + \rho_{i_0 a} \nonumber \\
    &= \alpha_{i_0 a} - \min_{k \neq a} \alpha_{i_0 k} \nonumber \\
    & \geq \min_{k \neq a_0} \alpha_{i_0 k} - \alpha_{i_0 k_0} = -p_{i_0 a_0}.
\end{align}
The inequality of the last line results from the fact that $\alpha_{i_0 a}\geq \min_{k \neq a_0} \alpha_{i_0 k}$ and $\min_{k \neq a} \alpha_{i_0 k} 
\leq \alpha_{i_0 k_0}$.
\end{proof}
Using the fact that $\rho_{ic(i)} = -\alpha_{iw(i)}$ and Lemma~\ref{Lemma_1_1}, the difference $\rho_{ic(i)} - \rho_{iw(i)}$ is upper-bounded by
\begin{align}
    \rho_{ic(i)} - \rho_{iw(i)} = -\alpha_{iw(i)} -\rho_{iw(i)} = -p_{iw(i)} \leq p_{ic(i)}.
\end{align}
The sum of the corresponding metrics holds that $\sum_{i \in F} p_{ic(i)} \geq \sum_{i \in F} (\rho_{ic(i)} - \rho_{iw(i)}) = \Delta \rho$. However, \eqref{deltarho} leads to $\Delta \rho + \Delta T = \sum_{i \in F} p_{ic(i)} \geq \Delta \rho$, implying$\Delta T \geq 0$. This solution has a smaller objective value than the optimal solution, which contradicts the hypothesis.

These theoretical results can be further validated through the numerical simulation presented in Section \ref{results}.

\section{Numerical Results} \label{results}
\begin{table}
	\centering
	\caption{Simulation setup.}
	\begin{tabular}[t]{lcc}
		\hline
		\textbf{Parameters} & \textbf{Values}\\
		\hline
		Total number of RSUs, $A$ & 7\\
		Total number of vehicles, $V$ & [10-100]\\
		Number of CPUs at RSU-$a$, $k_a$ & [4-8] \\
		Required transmission data, $L_i$ (Mbits) & 1 \\
		Computation workload, $C_i$ (Gcycles) & 1.2 $\pm\,30\%$ \\
		CPU frequency of RSU-$a$, $f_a$ (GHz) & [4-8] \\
		CPU frequency of vehicle-$i$, $f^{\mathrm{loc}}_i$ (GHz) & [2-3] \\
        Maximal allowable processing delay, $t^{\mathrm{max}}_i (s)$ & 0.6 \\
        Allowable Number of Vehicles associated with RSU-$a$, $N_a$& 20\\
		Maximal transmission power, $p_i$ (mW) & 100 \\
		Allocated channel bandwidth to vehicle-$i$, $B_i$ (MHz) & 2 \\
		Average power of noise at RSU-$a$, $\sigma^2$ (pW) & 0.2\\
        Circular coverage radius of RSU-$a$, $d_a$ (m) & 300\\
        Communication bandwidth, $B_p$ (MHz) & 2  \\
        Carrier frequency, $f_{\mathrm{carr}}$ (MHz) & 915\\
        Antenna gain, $A_d$ & 4.11\\
        Path loss exponent, $d_e$ & 2.8\\
        Message iteration count & 15\\
		\hline
	\end{tabular}
	\label{table:parameter}
\end{table}
This section presents numerical evaluations of the proposed algorithm under various VEC configurations. The simulation parameters are summarized in Table~\ref{table:parameter}.
The simulation environment models a $1$-km bidirectional road with six $3.5$-m-wide lanes, where three and four RSUs are placed alternately on both sides, each covering approximately $300$m \cite{fhwa2018ops}.
The wireless channel is modeled by free-space path loss given by $\bar{h}_{ia} = \bar{\zeta}(d_{ia}) = A_d([3 \times 10^8] / [4\pi f_{\mathrm{carr}} d_{ia}])^{d_e}$ \cite{huang2019deep} . 
The channel fading gain is modeled as $h_{ia} = \bar{h}_{ia} \phi$, where the independent channel fading factor $\phi$ is an exponential random variable with parameter $\lambda=1$. 
Each vehicle is tasked with offloading a computation-intensive application that supports hybrid electric vehicle (HEV) controls. 
The corresponding computational demand $C_i$ is modeled as $1.2$ Gcycles per task with 30\% variability \cite{tan2022decentralized}.
Given that up to 100 vehicles may be present in the simulation and a total of seven RSUs are deployed, the maximum number of vehicles that can be associated with each RSU is set to $N_a =20 >\lceil 100/7 \rceil$.

Several baseline algorithms are considered for performance comparison.
Self-organized clustering (SC) \cite{hou2023intelligent} groups vehicles based on proximity to RSUs, forming fewer clusters than the total number of RSUs to reduce the communication delay. 
Each cluster is associated with the nearest RSU. 
To limit queueing delay, the number of tasks per RSU is restricted so that tasks exceeding this limit are executed locally. 
Each vehicle compares $O(A)$ RSUs, resulting in a total complexity of $O(AV) = O(V^2)$.
In the game-based task assignment (GT) \cite{teng2023game, zhang2019task, Jiang2024poten}, vehicles iteratively compete to pick the RSU with the lowest current offloading delay, recalculating delays each round until reaching an equilibrium. 
Since sorting RSUs with their expected delay requires $O(A)$ operations, this yields an overall complexity of $O(AV) = O(V^2)$.
Balanced matching (BM) \cite{sun2023bargain, Li2024two} focuses on load balancing across RSUs. 
Each vehicle is initially assigned to the RSU with the lowest estimated delay. 
Tasks exceeding the per-RSU offload limit are reassigned to alternate RSUs. 
This requires  $\max(A,N_a)$ computations per iteration, resulting in $O(AN_a) = O(V^2)$ in total.
An evolutionary algorithm (EA) \cite{song2021joint, Zhao2024novel} uses a metaheuristic method for the discrete vehicle-to-RSU assignment problem.
A primal-dual algorithm (PD) \cite{feng2021joint, pervez2025efficient, zhong2022potam, Su2024primal} formulates the problem as a two-phase relaxation-based convexified optimization.
The primal phase determines vehicle-RSU associations using real-valued approximations of binary variables. The dual phase updates the predicted number of served vehicles at each RSU based on complementary slackness \cite{boyd2004convex}. The method alternates between vehicles and RSUs with per-iteration complexity of $O(AV) = O(V^2)$, and the complexity scales with the number of iterations.

\begin{figure}
	\begin{center}
		\includegraphics[width=.9\linewidth, height=.7\linewidth]{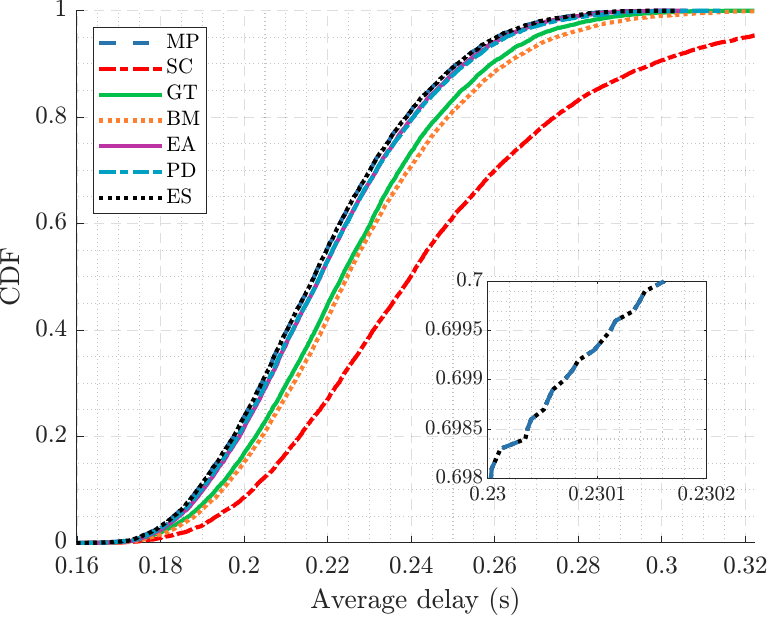}
		\caption{The CDF of the average delay with $V = 10$.} \label{fig:cdf}
	\end{center}
\end{figure}

Fig. \ref{fig:cdf} presents the cumulative distribution function (CDF) of offloading delays across $V = 10$ vehicles.
This shows the statistical distribution of task processing delays across the network.
For benchmarking purposes, the CDF of the global optimum, obtained via exhaustive search (ES), is plotted as well.
The results indicate that the proposed MP algorithm exactly reproduces the delay profile of the ES solution, which confirms that it achieves global optimality, as established in the theoretical analysis of Section \ref{theory}. MP achieves a minimum delay of $0.17$s and a worst-case delay of $0.31$s.
On the other hand, PD and GT methods result in a slightly higher worst-case delay of $0.32$s and show a $5$--$10\%$ degradation in delay performance in the intermediate range of $0.20$s to $0.28$s.
The SC scheme, which relies solely on physical proximity, exhibits a worst-case delay of $0.64$s. 
This indicates that spatial proximity alone is not a valid criterion for delay minimization in heterogeneous computing environments.
The proposed algorithm consistently outperforms existing algorithms across all delay profiles and aligns with the optimal association strategy.



\begin{figure}
	\begin{center}
		\includegraphics[width=.9\linewidth, height=.7\linewidth]{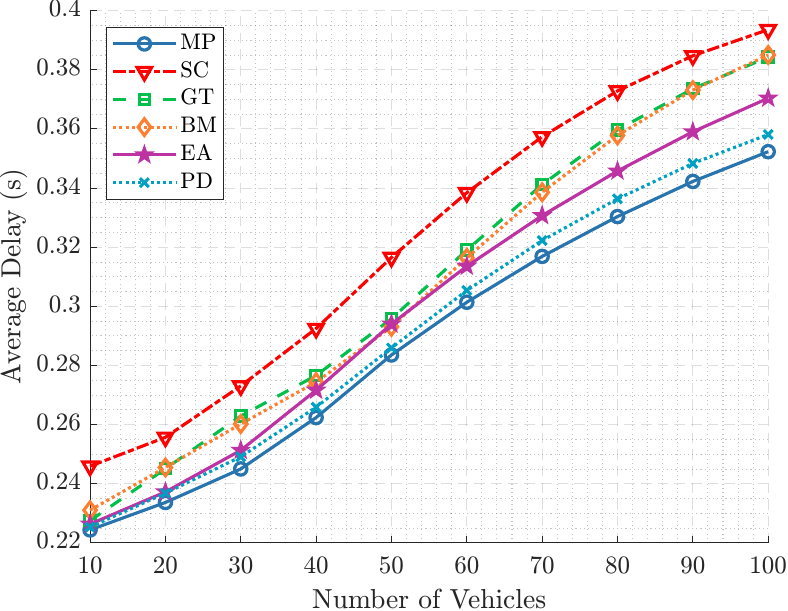}
		\caption{The average delay with respect to the number of vehicles.} \label{fig:veh_num}
	\end{center}
\end{figure}

Fig. \ref{fig:veh_num} illustrates the average processing delay as a function of the number of vehicles in the case of $7$ RSUs, each equipped with $4$--$8$ CPUs. 
The average delay is computed as the mean of all individual vehicle delays, each evaluated according to \eqref{obj_delay2}.
As expected, the overall delay increases with the number of vehicles as the competition heightens for computational resources.
Nevertheless, the proposed algorithm consistently achieves the lowest delay across all traffic densities. 
This is attributed to its ability to accurately estimate the computational load at each RSU and adapt vehicle-to-RSU associations according to CPU availability and channel conditions.
As the vehicle density becomes large, the rate of increase in delay begins to flatten. 
This saturation behavior results from the limited edge computing capacity, beyond which additional offloaded tasks face delays comparable to local execution times. 
Then, MP avoids queue buildup at RSUs by intelligently redirecting certain tasks to be processed locally. 
As a result, in high-load regimes, the average delay converges to the delay of local computing, which highlights a fundamental capacity limit of the edge infrastructure. This observation underscores provisioning sufficient edge computing resources in proportion to vehicular demand.

\begin{figure}
	\begin{center}
		\includegraphics[width=.9\linewidth, height=.7\linewidth]{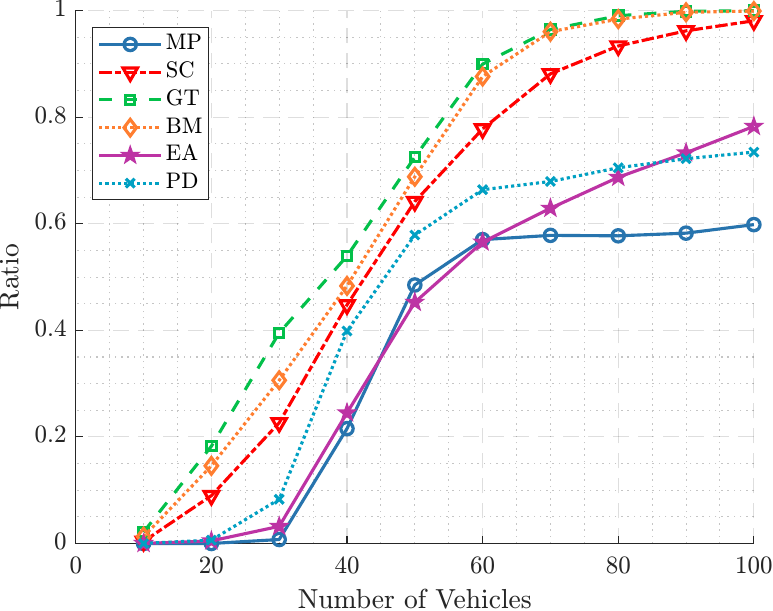}
		\caption{The ratio of tasks exceeding the computation capacities with respect to the number of vehicles.} \label{fig:ratio}
	\end{center}
\end{figure}

Fig. \ref{fig:ratio} shows the proportion of RSUs for which the number of assigned offloaded tasks exceeds the available CPU count.
This ratio is defined as the number of overloaded RSUs, i.e., those with more associated vehicles than CPUs, to the total RSU population.
When the number of vehicles remains below the total CPU count of edge servers, both MP and PD algorithms effectively utilize idle computational resources to avoid task queuing. 
However, as the vehicle count approaches the total CPU capacity, in particular, in the range of $30-60$ vehicles, the task congestion emerges as a critical factor.
In this regime, the MP algorithm significantly outperforms the alternatives by maintaining a lower proportion of overloaded RSUs across all network loads. 
This advantage stems from its combinatorial consideration of the queue length, which explicitly accounts for the maximum number of concurrent tasks that can be handled by each RSU.
While the EA algorithm, another combinatorial method, exhibits a relatively low overload ratio when the number of vehicles is below $60$, its performance deteriorates at higher traffic levels.
This degradation stems from the vast solution space, which makes it infeasible for EA to precisely identify the optimal task allocation.
Furthermore, the relaxation-based PD approach suffers from imprecision, as it approximates binary offloading policies using continuous variables.
Such approximations fail to guarantee the consistency with discrete CPU availability, leading to over-assignment and increased queuing delays.

\begin{figure}
	\begin{center}
		\includegraphics[width=.9\linewidth, height=.7\linewidth]{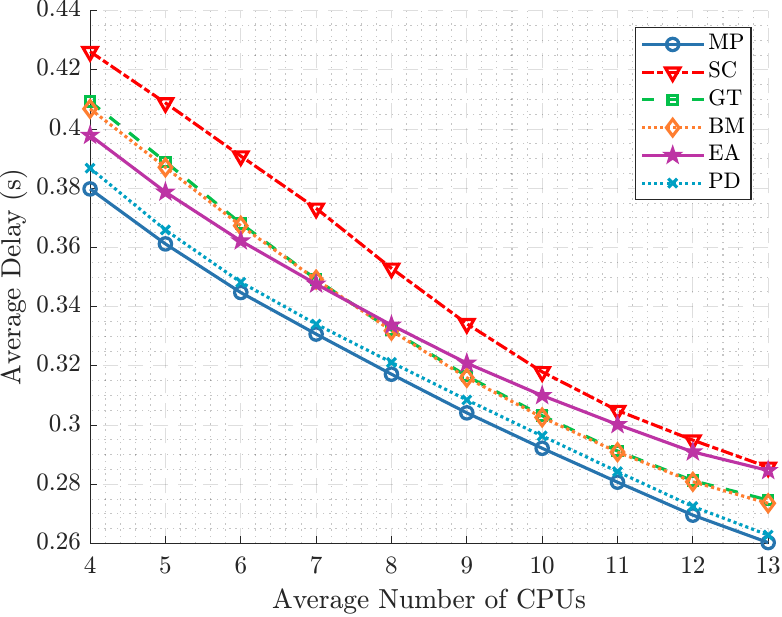}
		\caption{The average delay with respect to the number of CPUs.} \label{fig:cpu_num}
	\end{center}
\end{figure}

Fig. \ref{fig:cpu_num} depicts the average processing delay as a function of the average number of CPUs per RSU for $V = 100$ vehicle cases.
Increasing the number of CPUs reduces the overall delay, since more tasks can be processed concurrently.
While most algorithms benefit from this enhanced computational capacity, the EA algorithm exhibits a noticeably slower decline in delay.
EA tends to retain computationally intensive tasks on local devices in an effort to avoid potential queuing delays at RSUs, even when server resources are abundant. Thus, only light tasks are offloaded, and the increased CPU availability remains under-utilized. 
In contrast, the MP algorithm offloads both light and heavy tasks while maintaining optimal queue configurations across RSUs.
By strategically distributing the workload and considering the full scope of server-side resources, MP achieves a considerable reduction in delay.
This balanced and adaptive allocation exploits CPU availability more effectively than the baseline methods, resulting in greater performance gains as computational resources increase.

\begin{figure}
	\begin{center}
		\includegraphics[width=.9\linewidth, height=.7\linewidth]{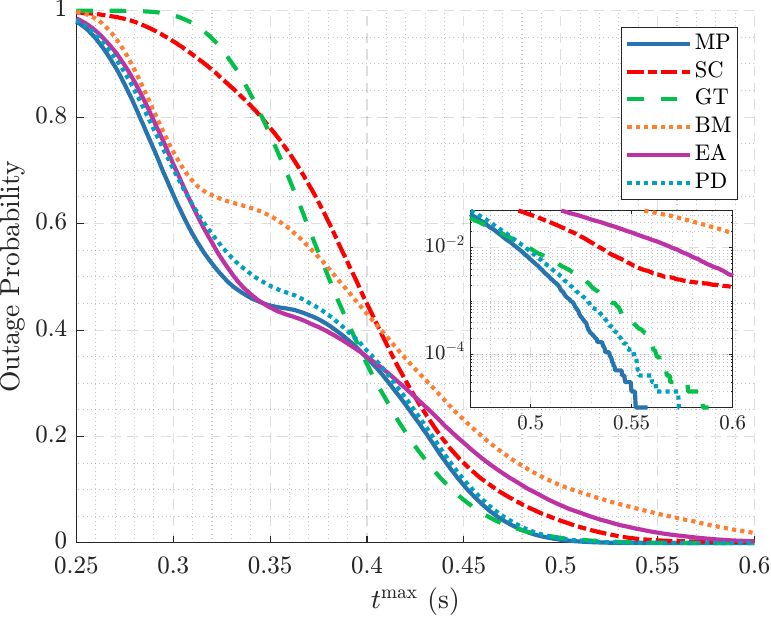}
		\caption{The outage probability.} \label{fig:outage}
	\end{center}
\end{figure}

Fig. \ref{fig:outage} plots the outage probability versus $t_i^{\max}$ for $V = 100$ for a comparative assessment of the reliability of different task offloading strategies. As the deadline becomes more relaxed, all schemes exhibit reduced outage probabilities, as expected. The MP algorithm reaches a plateau between $0.33$s and $0.38$s. 
This behavior arises as it intentionally refrains from further offloading once additional tasks would induce substantial queuing delays, especially when heavier tasks are present. 
Although offloading such tasks can reduce processing time, it also increases the risk of queue overflow. 
Thus, MP maintains a conservative offloading policy unless the delay constraint becomes sufficiently loose to accommodate the resulting queuing risk.
At $t^{\mathrm{max}} = 0.4$s, corresponding to the best-case delay for local computing, the GT algorithm achieves a lower outage than MP by aggressively offloading all tasks to edge servers. While this strategy leverages the high computation power of RSUs, it also leads to significant queuing delays as multiple vehicles converge on the same RSUs, thereby resulting in homogeneous and elevated processing delays.
Also, its lack of queuing awareness increases the likelihood of deadline violations.
The PD algorithm ensures moderate reliability by distributing tasks uniformly through convex optimization. 
However, its use of approximated variables yields imprecise integer assignments on the CPU count, which can result in server overloading and elevated outage rates.
In contrast, MP performs exact integer-based task-to-CPU assignment, achieving perfect matching between the server capacity and the task demand, thereby approaching 100\% accuracy, in comparison to 50\% of the PD scheme.
This precise task-to-CPU matching allows MP to attain high utilization and low queuing congestion, significantly improving reliability.
\begin{figure}
	\begin{center}
		\includegraphics[width=.9\linewidth]{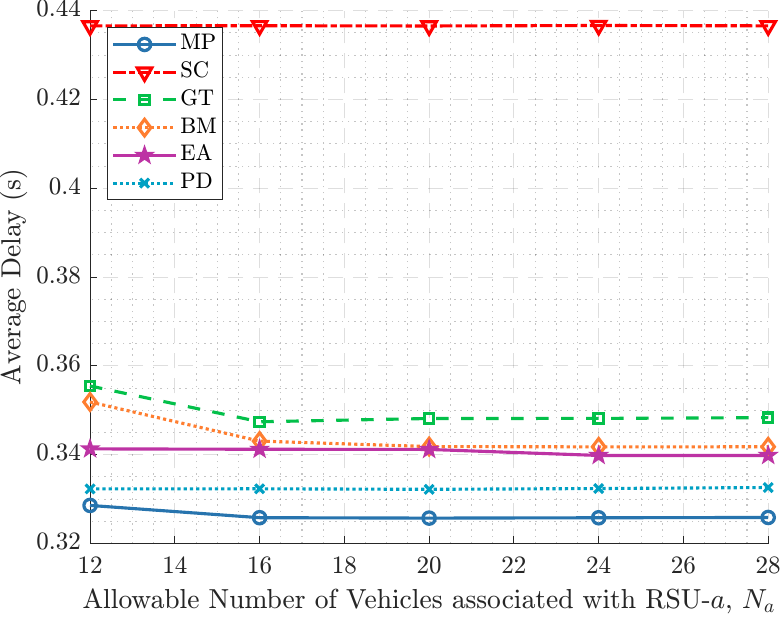}
		\caption{The average delay with respect to the maximum allowed number of vehicles.} \label{fig:max}
	\end{center}
\end{figure}

Fig. \ref{fig:max} depicts the average processing delay as a function of the allowable number of vehicles that can be associated with each RSU for the case of $V=100$ vehicles. 
As the per-RSU capacity increases, the overall delay decreases because a large portion of tasks can be offloaded to edge servers rather than processed locally. However, the benefit becomes marginal beyond $N_a = 16$, as most vehicles are already able to offload their tasks under this configuration. 
The proposed MP scheme consistently exhibits the adaptation to the edge server capacity and maintains the lowest average delay.
\begin{figure}
	\begin{center}
		\includegraphics[width=.9\linewidth]{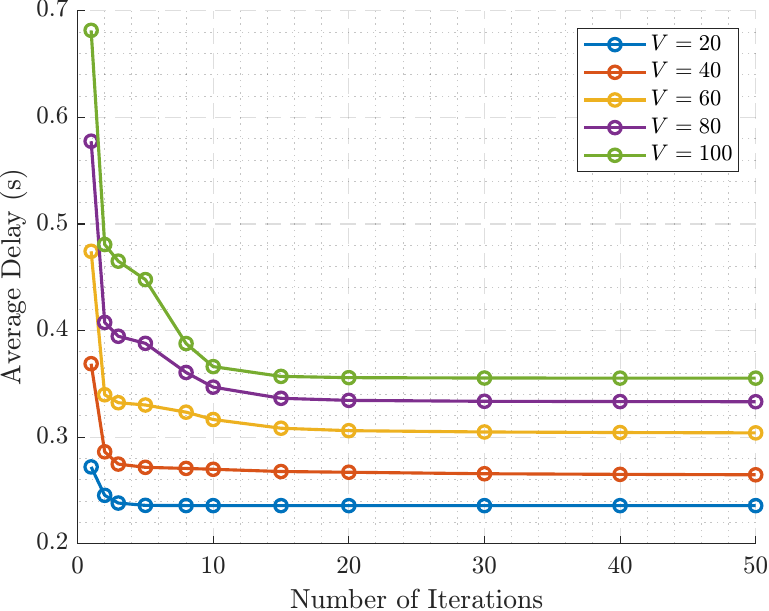}
		\caption{Convergence behavior.} \label{fig:conv}
	\end{center}
\end{figure}

Fig. \ref{fig:conv} shows the convergence behavior of the MP algorithm. The average objective-value trajectories are shown over a maximum of $50$ iterations for $[20,100]$ vehicles under a deployment of $7$ RSUs. The algorithm converges on average within $10$ iterations. 
This consistent convergence confirms the computational efficiency and scalability of the algorithm.

\begin{figure*}
    \centering
    \subfigure[The RSU distribution.]{\includegraphics[width=0.32\textwidth, height = 0.31\textwidth]{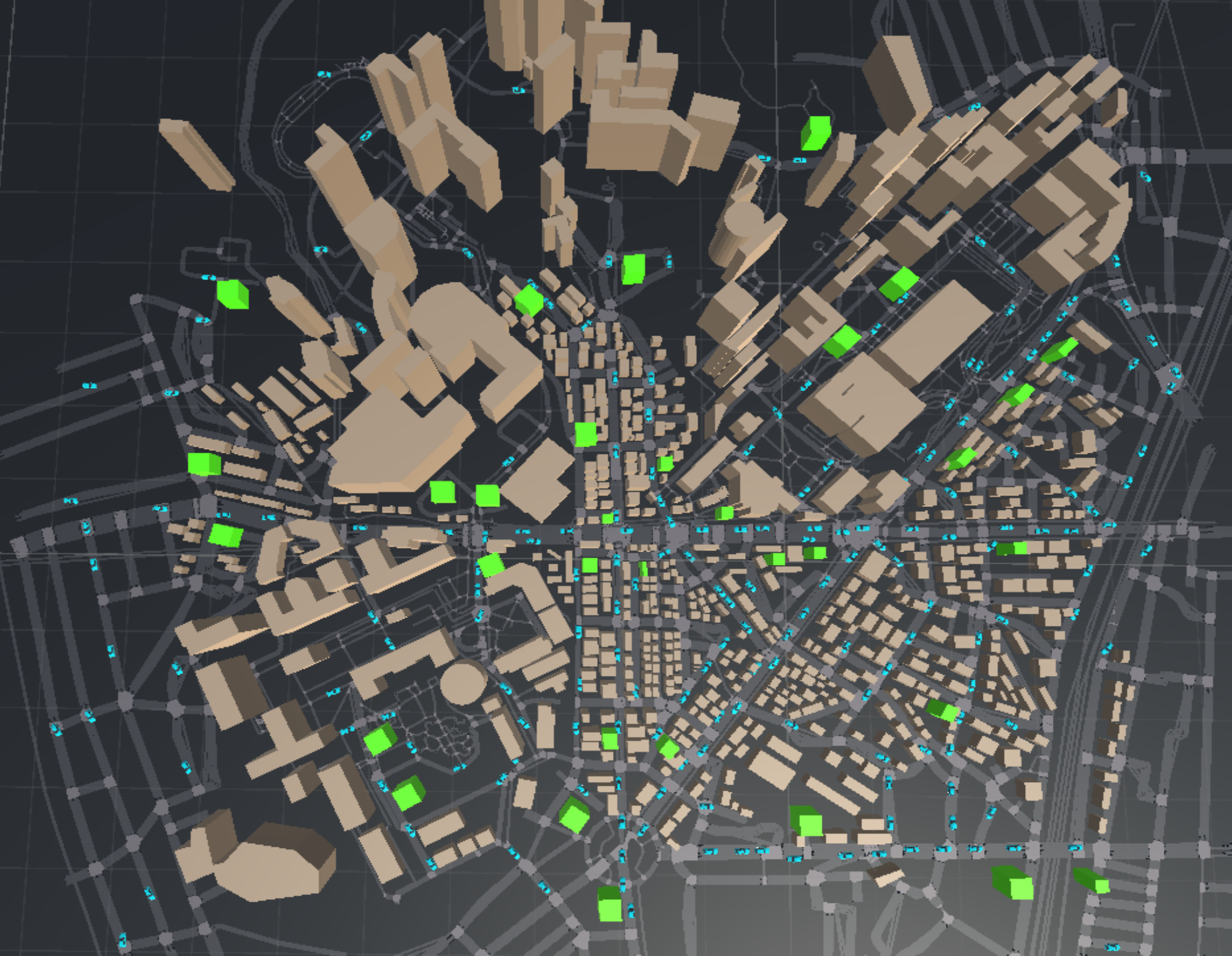}\label{fig_unity_map}}
    \subfigure[The primal-dual algorithm \cite{feng2021joint, zhong2022potam, pervez2025efficient}.]{\includegraphics[width=0.32\textwidth, height = 0.31\textwidth]{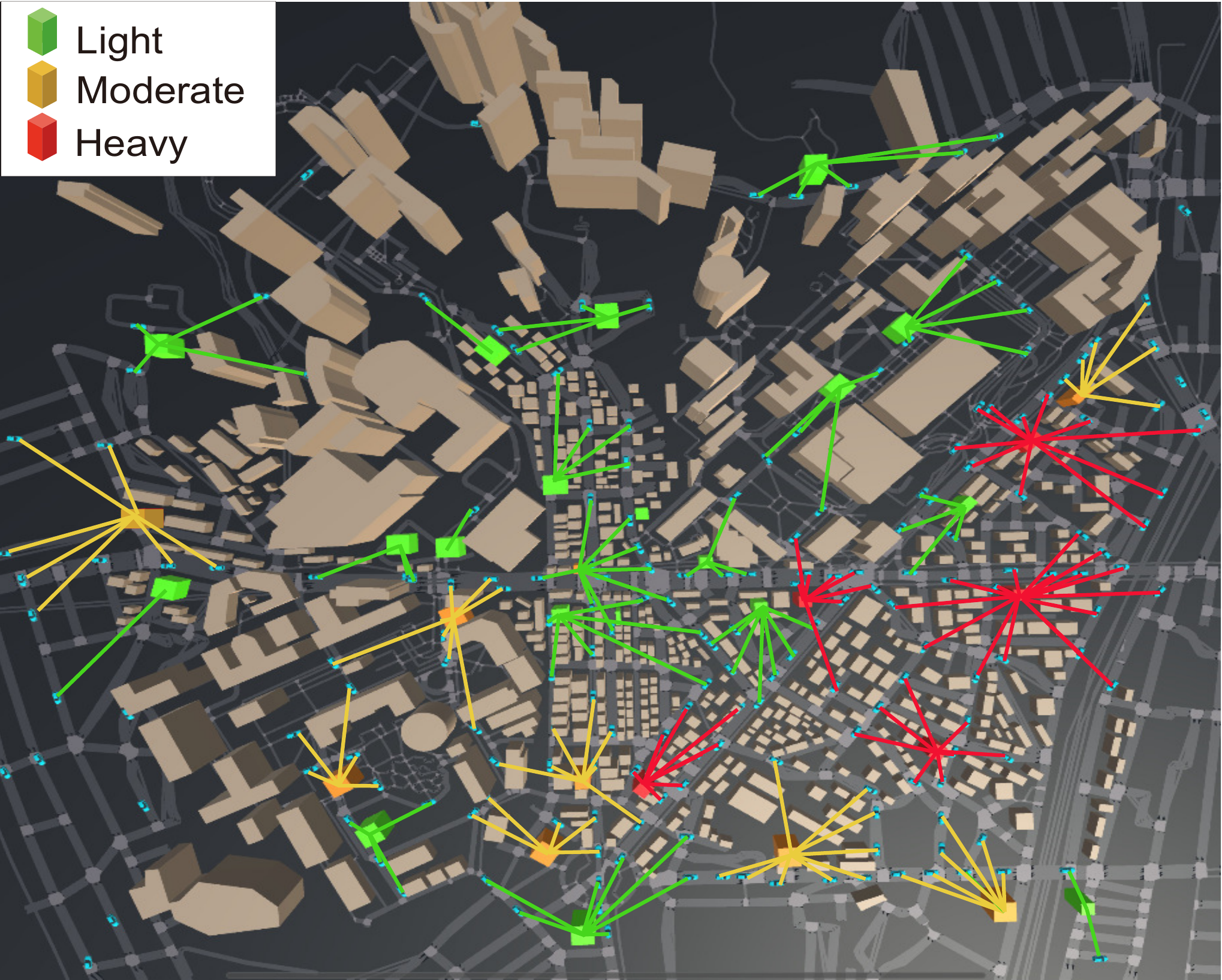}\label{fig_unity_pd}}
    \subfigure[The proposed MP algorithm.]{\includegraphics[width=0.32\textwidth, height = 0.31\textwidth]{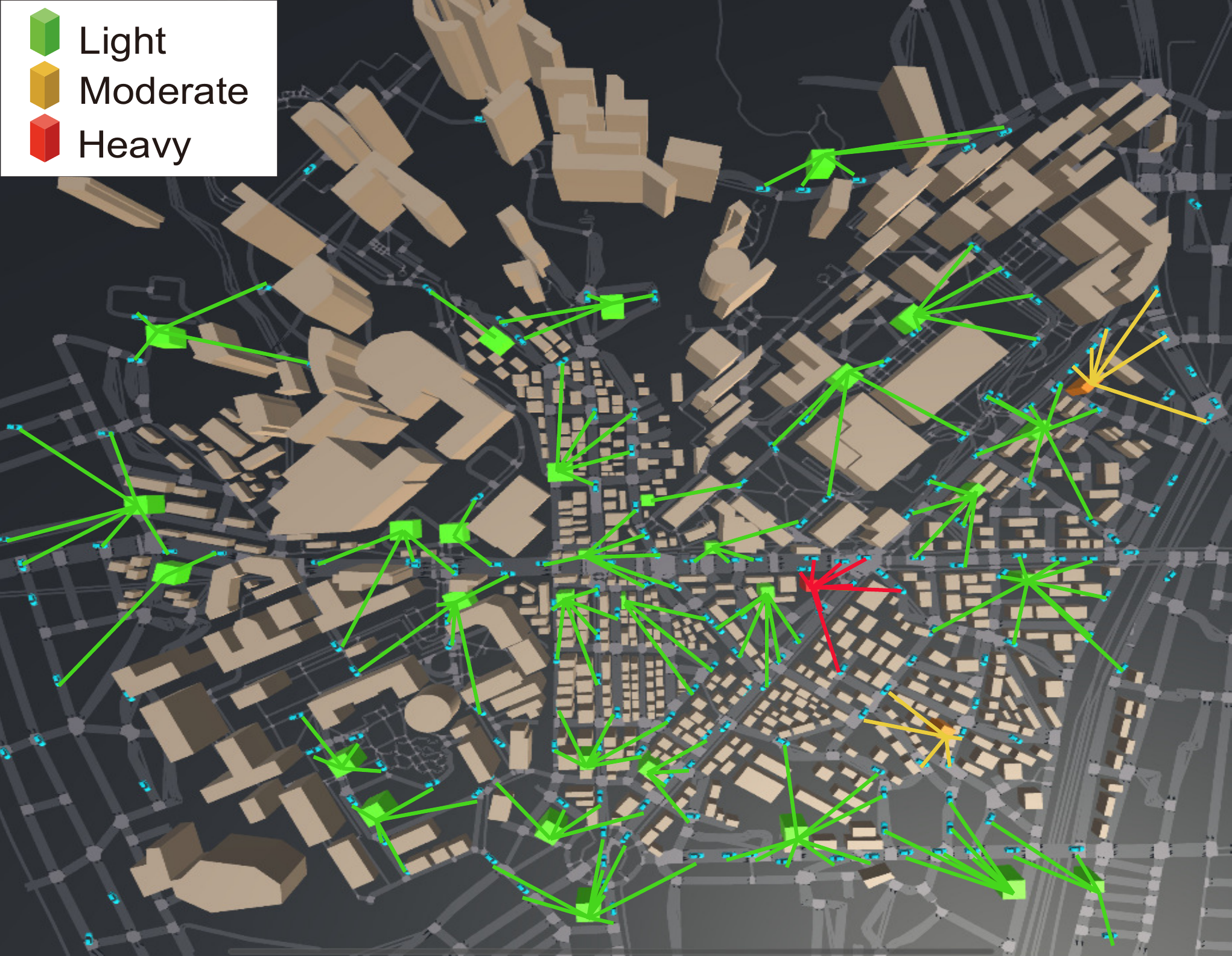}\label{fig_unity_mp}}
    \caption{The verification of feasibility over a real-map digital-twin platform.}
    \label{fig:unity}
\end{figure*}
Fig. \ref{fig:unity} illustrates the feasibility evaluation results obtained using a real-map digital twin platform. The testbed is implemented in Unity, based on the area surrounding Korea University in Seoul with real cell information. 
A total of $33$ sites equipped with $5$G base stations, are deployed as RSUs to support edge computing for $200$ vehicles. The proposed MP algorithm is compared against the PD algorithm in this environment. To evaluate system load balancing, the RSUs are classified into three categories based on their load status: lightly loaded (fewer associated vehicles than CPUs), moderately loaded (1.5 times the CPU count), and busy (more than 1.5 times the CPU count). 
The proposed algorithm obtains significant average delay reductions by redistributing tasks from five overloaded RSUs to underutilized ones.
This improvement is attributed to the precise estimation of $n_a$. Unlike PD approximating $n_a$ with continuous values, the MP algorithm explicitly computes integer values and adjusts vehicle associations accordingly to balance processing delays across RSUs. 
With the exact estimation of $n_a$, the proposed technique determines efficient load-balancing over servers to minimize the overall delay.

These extensive simulation results confirm that the proposed distributed algorithm consistently outperforms existing methods in both performance and convergence efficiency. 
The theoretical guarantees established earlier in this paper are thus substantiated through both extensive numerical experiments and real-world digital twin simulations, demonstrating viability for practical deployment in VEC networks.

\section{Conclusion}
\label{conclusion}
This paper presents a distributed offloading strategy aimed at minimizing network-wide computation latency in VEC systems. 
The proposed method, built upon an MP framework, determines vehicle-to-RSU associations by considering load-dependent processing capabilities of edge servers. 
A theoretical analysis establishes both the convergence and optimality of the proposed scheme.
Simulation results underscore the importance of accurately estimating the number of tasks assigned to each edge server.
In particular, accurately estimating the processing capacity of edge servers can balance resource under-utilization and load congestion.
The proposed technique precisely captures these factors using discrete queue variables, thereby achieving an optimal solution.
A notable limitation also remains. While the proposed approach minimizes the average processing delay, it does not explicitly optimize for worst-case latency, which is a critical factor in outage performance and is heavily influenced by task-priority scheduling.
Therefore, reducing tail latency through dynamic or priority-aware scheduling mechanisms remains a key direction for future work.



\bibliographystyle{IEEEtran}
\bibliography{ref}

@article{garcia2021,
  title={A tutorial on {5G} {NR} {V2X} communications},
  author={Garcia et al., Mario H Casta{\~n}eda},
  journal={IEEE Commun. Surveys \& Tut.},
  volume={23},
  number={3},
  pages={1972--2026},
  year={2021},
  month={Feb.},
  publisher={IEEE}
}

@ARTICLE{zhang2024survey,
  author={Zhang, Siqi and Yi, Na and Ma, Yi},
  journal={IEEE Trans. Int. Trans. Sys.}, 
  title={A Survey of Computation Offloading With Task Types}, 
  year={2024},
  volume={25},
  number={8},
  pages={8313-8333},
  publisher={IEEE},
  month=Aug
}

@inproceedings{lepers2017towards,
  title={Towards proving optimistic multicore schedulers},
  author={Lepers, Baptiste and Zwaenepoel, Willy and Lozi, Jean-Pierre and Palix, Nicolas and Gouicem, Redha and Sopena, Julien and Lawall, Julia and Muller, Gilles},
  booktitle={Proc. 16th Workshop Hot Topics Oper. Sys. (HotOS)},
  pages={18--23},
  year={2017}
}

@book{xu2007load,
  title={Load balancing in parallel computers: theory and practice},
  author={Xu, Chenzhong and Lau, Francis CM},
  volume={381},
  year={2007},
  publisher={Springer}
}

@article{hofmeyr2010load,
  title={Load balancing on speed},
  author={Hofmeyr, Steven and Iancu, Costin and Blagojevi{\'c}, Filip},
  journal={ACM Sigplan Notices},
  volume={45},
  number={5},
  pages={147--158},
  year={2010},
  publisher={ACM New York, NY, USA}
}

@article{zhuravlev2012survey,
  title={Survey of scheduling techniques for addressing shared resources in multicore processors},
  author={Zhuravlev, Sergey and Saez, Juan Carlos and Blagodurov, Sergey and Fedorova, Alexandra and Prieto, Manuel},
  journal={ACM Computing Surveys (CSUR)},
  volume={45},
  number={1},
  pages={1--28},
  year={2012},
  publisher={ACM New York, NY, USA}
}

@ARTICLE{dong2023load,
  author={Dong, Xuewen and Di, Zijie and Wang, Liangmin and Yao, Qingsong and Li, Guangxia and Shen, Yulong},
  journal={IEEE Trans. Wirel. Commun.}, 
  title={Load Balancing of Double Queues and Utility-Workload Tradeoff in Heterogeneous Mobile Edge Computing}, 
  year={2023},
  volume={22},
  number={7},
  pages={4313-4326},
  month={July}
}

@ARTICLE{pervez2025efficient,
  author={Pervez, Farhan and Zhao, Lian},
  journal={IEEE IoTJ.}, 
  title={Efficient Queue-Aware Communication and Computation Optimization for a MEC-Assisted Satellite-Aerial-Terrestrial Network}, 
  year={2025},
  volume={12},
  number={10},
  pages={13972-13987},
  month=jan,
}

@article{hartmanis1982computers,
  title={Computers and intractability: a guide to the theory of {NP}-completeness},
  author={Hartmanis, Juris},
  journal={SIAM Review},
  volume={24},
  number={1},
  pages={90},
  year={1982},
  month=jan,
}

@book{garey2002computers,
  title={Computers and intractability},
  author={Garey, Michael R and Johnson, David S},
  volume={29},
  year={2002},
  publisher={wh freeman New York}
}

@ARTICLE{fan2025veh,
  author={Fan, Wenhao and Yu, Yang and Bao, Chenhui and Liu, Yuan’an},
  journal={IEEE Trans. Mobile Comp.}, 
  title={Vehicular Edge Intelligence: DRL-Based Resource Orchestration for Task Inference in Vehicle-RSU-Edge Collaborative Networks}, 
  year={2025},
  volume={24},
  number={10},
  pages={10927-10944},
  month={Oct.}
}

@ARTICLE{zhang2025drl,
  author={Zhang, Yaoyin and Fan, Wenhao and Yu, Yang and Liu, Yuan’An},
  journal={IEEE Trans. Int. Trans. Sys.}, 
  title={DRL-Based Resource Orchestration for Vehicular Edge Computing With Multi-Edge and Multi-Vehicle Assistance}, 
  year={2025},
  volume={26},
  number={6},
  pages={8764-8779},
  month={June}
}

@ARTICLE{fan2024deep,
  author={Fan, Wenhao and Zhang, Yaoyin and Zhou, Guangtao and Liu, Yuan’an},
  journal={IEEE Trans. Int. Trans. Sys.}, 
  title={Deep Reinforcement Learning-Based Task Offloading for Vehicular Edge Computing With Flexible RSU-RSU Cooperation}, 
  year={2024},
  volume={25},
  number={7},
  pages={7712-7725},
  month={July}
}

@ARTICLE{liang2019multiuser,
  author={Liang, Zezu and Liu, Yuan and Lok, Tat-Ming and Huang, Kaibin},
  journal={IEEE Trans. Wirel. Commun.}, 
  title={Multiuser Computation Offloading and Downloading for Edge Computing With Virtualization}, 
  year={2019},
  volume={18},
  number={9},
  pages={4298-4311},
  month={Sept.}
}

@ARTICLE{lee2024distributed,
  author={Lee, Hongju and Il Choi, Sung and Hyun Lee, Sang and Debbah, Mérouane and Lee, Inkyu},
  journal={IEEE IoTJ.}, 
  title={Distributed Task Offloading in Mobile-Edge Computing With Virtual Machines}, 
  year={2024},
  volume={11},
  number={13},
  pages={24083-24097},
  month={July.}
}

@article{waheed2022comprehensive,
  title={A comprehensive review of computing paradigms, enabling computation offloading and task execution in vehicular networks},
  author={Waheed et al., Abdul },
  journal={IEEE Access},
  volume={10},
  pages={3580--3600},
  year={2022},
  publisher={IEEE},
  month={Jan.}
}

@article{BALSAMO2003269,
    title = {A review on queueing network models with finite capacity queues for software architectures performance prediction},
    journal = {Performance Evaluation},
    volume = {51},
    number = {2},
    pages = {269-288},
    year = {2003},
    doi = {https://doi.org/10.1016/S0166-5316(02)00099-8},
    author = {Simonetta Balsamo and Vittoria {De Nitto Personè} and Paola Inverardi},
    month={Feb.}
}

@article{kong2022edge,
  title={Edge-computing-driven internet of things: A survey},
  author={Kong, Linghe and Tan, Jinlin and Huang, Junqin and Chen, Guihai and Wang, Shuaitian and Jin, Xi and Zeng, Peng and Khan, Muhammad and Das, Sajal K},
  journal={ACM Computing Surveys},
  volume={55},
  number={8},
  pages={1--41},
  year={2022},
  publisher={ACM New York, NY},
  month={Dec.}
}

@article{duan2020emerging,
  title={Emerging technologies for {5G-I}o{V} networks: applications, trends and opportunities},
  author={Duan, Wei and Gu, Jinyuan and Wen, Miaowen and Zhang, Guoan and Ji, Yancheng and Mumtaz, Shahid},
  journal={IEEE Netw.},
  volume={34},
  number={5},
  pages={283--289},
  year={2020},
  publisher={IEEE},
  month={July}
}

@article{liu2021vehicular,
  title={Vehicular edge computing and networking: A survey},
  author={Liu, Lei and Chen, Chen and Pei, Qingqi and Maharjan, Sabita and Zhang, Yan},
  journal={Mobile Netw. Appl.},
  volume={26},
  pages={1145--1168},
  year={2020},
  month={July},
  publisher={Springer}
}

@article{givoni2009binary,
  title={A binary variable model for affinity propagation},
  author={Givoni, Inmar E and Frey, Brendan J},
  journal={Neural computation},
  volume={21},
  number={6},
  pages={1589--1600},
  year={2009},
  publisher={MIT Press},
  month={June}
}

@inproceedings{omar2012evaluation,
  title={Evaluation of VeMAC for V2V and V2R communications under unbalanced vehicle traffic},
  author={Omar, Hassan Aboubakr and Zhuang, Weihua and Li, Li},
  booktitle={2012 IEEE Vehicular Technology Conference (VTC Fall)},
  pages={1--5},
  year={2012},
  month={Sept.}
}

@article{zhang2019task,
  title={Task offloading in vehicular edge computing networks: A load-balancing solution},
  author={Zhang, Jie and Guo, Hongzhi and Liu, Jiajia and Zhang, Yanning},
  journal={IEEE Trans. Veh. Tech.},
  volume={69},
  number={2},
  pages={2092--2104},
  year={2020},
  publisher={IEEE},
  month={Feb.}
}

@article{luo2021resource,
  title={Resource scheduling in edge computing: A survey},
  author={Luo, Quyuan and Hu, Shihong and Li, Changle and Li, Guanghui and Shi, Weisong},
  journal={IEEE commun. Surveys \& Tut.},
  volume={23},
  number={4},
  pages={2131--2165},
  year={2021},
  publisher={IEEE},
  month={Aug.}
}

@article{ma2022drl,
  title={DRL-Based Computation Offloading With Queue Stability for Vehicular-Cloud-Assisted Mobile Edge Computing Systems},
  author={Ma, Guifu and Wang, Xiaowei and Hu, Manjiang and Ouyang, Wenjie and Chen, Xiaolong and Li, Yang},
  journal={IEEE Trans. Int. Vehicles},
  volume={8},
  number={4},
  pages={13425--13442},
  year={2023},
  publisher={IEEE},
  month={Apr.}
}

@ARTICLE{chen2025multi,
  author={Chen, Xing and Xiao, Bohuai and Lin, Xinyu and Chen, Zheyi and Min, Geyong},
  journal={IEEE Trans. Mobile Comp.}, 
  title={Multi-agent Collaboration for Vehicular Task Offloading Using Federated Deep Reinforcement Learning}, 
  year={2025},
  volume={},
  number={},
  pages={1-16},
  note={early acess}
}

@article{hou2023intelligent,
  title={Intelligent decision-based edge server sleep for green computing in MEC-enabled IoV networks},
  author={Hou, Peng and Huang, Yi and Zhu, Hongbin and Lu, Zhihui and Huang, Shin-Chia and Chai, Hongfeng},
  journal={IEEE Trans. Int. Veh.},
  volume={9},
  number={2},
  pages={3687--3703},
  year={2023},
  publisher={IEEE},
  month={Jan.}
}

@article{kschischang2001factor,
  title={Factor graphs and the sum-product algorithm},
  author={Kschischang, Frank R and Frey, Brendan J and Loeliger, H-A},
  journal={IEEE Trans. Inform. Theory},
  volume={47},
  number={2},
  pages={498--519},
  year={2001},
  publisher={IEEE},
  month={Feb.}
}

@article{tan2022decentralized,
  title={Decentralized Convex Optimization for Joint Task Offloading and Resource Allocation of Vehicular Edge Computing Systems},
  author={Tan, Kaige and Feng, Lei and D{\'a}n, Gy{\"o}rgy and T{\"o}rngren, Martin},
  journal={IEEE Trans. Veh. Tech.},
  volume={71},
  number={12},
  pages={13226--13241},
  year={2022},
  publisher={IEEE},
  month={Dec.}
}

@misc{fhwa2018ops,
  author       = {{U.S. Dept. Trans., Fed. Highway Admin.}},
  title        = {Connected vehicle ref. arch.},
  howpublished = {[Online]. Available:\url{https://ops.fhwa.\mbox{dot.gov}/\mbox{fhwahop18030}/}},
  note         = {accessed May 10, 2025.},
}

@ARTICLE{sun2023bargain,
  author={Sun, Zemin and Sun, Geng and Liu, Yanheng and Wang, Jian and Cao, Dongpu},
  journal={IEEE Trans. Mobile Comp.}, 
  title={BARGAIN-MATCH: A Game Theoretical Approach for Resource Allocation and Task Offloading in Vehicular Edge Computing Networks}, 
  year={2024},
  volume={23},
  number={2},
  pages={1655-1673},
  month={Feb.}
}

@inproceedings{wang2009mapping,
  title={Mapping parallelism to multi-cores: a machine learning based approach},
  author={Wang, Zheng and O'Boyle, Michael FP},
  booktitle={Proc. 14th ACM SIGPLAN Symp. Princ. Pract. Para. Prog. (PPoPP)},
  pages={75--84},
  year={2009}
}

@INPROCEEDINGS{Wang2024invar,
  author={Wang, Bin and Irwin, David and Shenoy, Prashant and Towsley, Don},
  booktitle={IEEE INFOCOM.}, 
  title={INVAR: Inversion Aware Resource Provisioning and Workload Scheduling for Edge Computing}, 
  year={2024},
  volume={},
  number={},
  pages={1511-1520},
  
}

@article{moubayed2020edge,
  title={Edge-enabled {V2X} service placement for intelligent transportation systems},
  author={Moubayed, Abdallah and Shami, Abdallah and Heidari, Parisa and Larabi, Adel and Brunner, Richard},
  journal={IEEE Trans. Mobile Comp.},
  volume={20},
  number={4},
  pages={1380--1392},
  year={2020},
  publisher={IEEE},
  month={Jan.}
}

@ARTICLE{yue2021todg,
  author={Yue, Sheng and Ren, Ju and Qiao, Nan and Zhang, Yongmin and Jiang, Hongbo and Zhang, Yaoxue and Yang, Yuanyuan},
  journal={IEEE Trans. Parallel Distrib. Sys.}, 
  title={TODG: Distributed Task Offloading With Delay Guarantees for Edge Computing}, 
  year={2022},
  volume={33},
  number={7},
  pages={1650-1665},
  month={July}
  }

@ARTICLE{Jiang2024poten,
  author={Jiang, Qinting and Xu, Xiaolong and Bilal, Muhammad and Crowcroft, Jon and Liu, Qi and Dou, Wanchun and Jiang, Jingyan},
  journal={IEEE Trans. Int. Trans. Sys.}, 
  title={Potential Game Based Distributed IoV Service Offloading With Graph Attention Networks in Mobile Edge Computing}, 
  year={2024},
  volume={25},
  number={9},
  pages={10912-10925},
  month={Sep.}
}

@ARTICLE{Li2024two,
  author={Li, Xuehan and Jing, Tao and Wang, Xiaoxuan and Han, Dengyu and Fan, Xin and Dong, Honghui and Li, Xiangyu and Richard Yu, Fei},
  journal={IEEE Trans. Int. Trans. Sys.}, 
  title={Two-Stage Offloading for an Enhancing Distributed Vehicular Edge Computing and Networks: Model and Algorithm}, 
  year={2024},
  volume={25},
  number={11},
  pages={17744-17761},
  month={Nov.}
}

@ARTICLE{Zhao2024novel,
  author={Zhao, Liang and Li, Tianyu and Meng, Guiying and Hawbani, Ammar and Min, Geyong and Al-Dubai, Ahmed Y. and Zomaya, Albert Y.},
  journal={IEEE Trans. Comp.}, 
  title={Novel Lagrange Multipliers-Driven Adaptive Offloading for Vehicular Edge Computing}, 
  year={2024},
  volume={73},
  number={12},
  pages={2868-2881},
  month={Dec.}
}

@ARTICLE{Su2024primal,
  author={Su, Qian and Zhang, Qinghui and Li, Weidong and Zhang, Xuejie},
  journal={IEEE Trans. Mobile Comp.}, 
  title={Primal-Dual-Based Computation Offloading Method for Energy-Aware Cloud-Edge Collaboration}, 
  year={2024},
  volume={23},
  number={2},
  pages={1534-1549},
  month={Feb.}
}

@article{zhong2022potam,
  title={POTAM: a parallel optimal task allocation mechanism for large-scale delay sensitive mobile edge computing},
  author={Zhong, Xiaoxiong and Wang, Xinghan and Yang, Tingting and Yang, Yuanyuan and Qin, Yang and Ma, Xiaoke},
  journal={IEEE Trans. Commun.},
  volume={70},
  number={4},
  pages={2499--2517},
  year={2022},
  publisher={IEEE},
  month={Feb.}
}

@ARTICLE{mach2017mobile,
  author={Mach, Pavel and Becvar, Zdenek},
  journal={IEEE Commun. Surveys \& Tuts.}, 
  title={Mobile Edge Computing: A Survey on Architecture and Computation Offloading}, 
  year={2017},
  volume={19},
  number={3},
  pages={1628-1656},
  month={3rd Quart.}
}

@book{stallings2009operating,
  title={Operating Systems: Internals and Design Principles},
  author={Stallings, William},
  year={2009},
  publisher={Pearson}
}

@article{xu2019energy,
  title={Energy-saving computation offloading by joint data compression and resource allocation for mobile-edge computing},
  author={Xu et al., Ding },
  journal={IEEE Commun. Lett.},
  volume={23},
  number={4},
  pages={704--707},
  year={2019},
  publisher={IEEE},
  month={Feb.}
}

@article{chen2015efficient,
  title={Efficient multi-user computation offloading for mobile-edge cloud computing},
  author={Chen, Xu and Jiao, Lei and Li, Wenzhong and Fu, Xiaoming},
  journal={IEEE/ACM Trans. Netw.},
  volume={24},
  number={5},
  pages={2795--2808},
  year={2015},
  publisher={IEEE},
  month={Oct.}
}

@book{boyd2004convex,
  title={Convex optimization},
  author={Boyd, Stephen and Vandenberghe, Lieven},
  year={2004},
  publisher={Cambridge University Press}
}

@article{huang2019deep,
  title={Deep reinforcement learning for online computation offloading in wireless powered mobile-edge computing networks},
  author={Huang, Liang and Bi, Suzhi and Zhang, Ying-Jun Angela},
  journal={IEEE Trans. Mobile Comp.},
  volume={19},
  number={11},
  pages={2581--2593},
  year={2019},
  publisher={IEEE},
  month={July}
}

@article{yedidia2011message,
  title={Message-Passing Algorithms for Inference and Optimization: “Belief Propagation” and “Divide and Concur”},
  author={Yedidia, Jonathan S},
  journal={J. Stat. Phys.},
  volume={145},
  pages={860--890},
  year={2011},
  publisher={Springer},
  month={Oct.}
}

@article{bertsekas1997nonlinear,
  title={Nonlinear programming},
  author={Bertsekas, Dimitri P},
  journal={J. Oper. Res. Soc.},
  volume={48},
  number={3},
  pages={334--334},
  year={1997},
  publisher={Taylor \& Francis}
}

@article{heinzelman2002application,
  title={An application-specific protocol architecture for wireless microsensor networks},
  author={Heinzelman, Wendi B and Chandrakasan, Anantha P and Balakrishnan, Hari},
  journal={IEEE Trans. Wirel. Commun.},
  volume={1},
  number={4},
  pages={660--670},
  year={2002},
  publisher={IEEE},
  month={Oct.}
}

@article{feng2021joint,
  title={Joint task partitioning and user association for latency minimization in mobile edge computing networks},
  author={Feng, Mingjie and Krunz, Marwan and Zhang, Wenhan},
  journal={IEEE Trans. Veh. Tech.},
  volume={70},
  number={8},
  pages={8108--8121},
  year={2021},
  publisher={IEEE},
  month={June}
}

@ARTICLE{teng2023game,
  author={Teng, Haojun and Li, Zhetao and Cao, Kun and Long, Saiqin and Guo, Song and Liu, Anfeng},
  journal={IEEE Trans. Mobile Comp.}, 
  title={Game Theoretical Task Offloading for Profit Maximization in Mobile Edge Computing}, 
  year={2023},
  volume={22},
  number={9},
  pages={5313-5329},
  month={Sept.}
}

@article{lyu2016multiuser,
  title={Multiuser joint task offloading and resource optimization in proximate clouds},
  author={Lyu, Xinchen and Tian, Hui and Sengul, Cigdem and Zhang, Ping},
  journal={IEEE Trans. Veh. Tech.},
  volume={66},
  number={4},
  pages={3435--3447},
  year={2016},
  publisher={IEEE},
  month={July}
}

@article{shojafar2016energy,
  title={Energy-efficient adaptive resource management for real-time vehicular cloud services},
  author={Shojafar, Mohammad and Cordeschi, Nicola and Baccarelli, Enzo},
  journal={IEEE Trans. Cloud Comp.},
  volume={7},
  number={1},
  pages={196--209},
  year={2016},
  publisher={IEEE}
}

@ARTICLE{peng2024stochastic,
  author={Peng, Yingsheng and Duan, Jingpu and Zhang, Jinbei and Li, Weichao and Liu, Yong and Jiang, Fuli},
  journal={IEEE J. Sel. Areas Commun.}, 
  title={Stochastic Long-Term Energy Optimization in Digital Twin-Assisted Heterogeneous Edge Networks}, 
  year={2024},
  volume={42},
  number={11},
  pages={3157-3171},
  month={Nov},
}

@ARTICLE{song2021joint,
  author={Song, Heekang and Gu, Bonjun and Son, Kyungrak and Choi, Wan},
  journal={IEEE Trans. Netw. Sci. Eng.}, 
  title={Joint Optimization of Edge Computing Server Deployment and User Offloading Associations in Wireless Edge Network via a Genetic Algorithm}, 
  year={2022},
  volume={9},
  number={4},
  pages={2535-2548},
  month={Apr.}
}

\vskip -2\baselineskip plus -1fil
\begin{IEEEbiographynophoto}
{Sungho Cho} received the B.S. degree from Korea University, Seoul, South Korea, in 2024.
He is currently pursuing the M.S. degree with the Department of Electrical and Computer Engineering, University of California, Los angeles. His research interests include vehicle edge computing systems and full-duplex systems optimization.
\end{IEEEbiographynophoto}
\vskip -2\baselineskip plus -1fil

\begin{IEEEbiographynophoto}
{Sung Il Choi} received the B.S. degree in electrical engineering from Korea University, Seoul, South Korea, in 2021. He is currently pursuing the Ph.D. degree with the Department of Electrical Engineering, Korea University. His research interests include communication theory applied to Internet of Things (IoT) and vehicular networks.
\end{IEEEbiographynophoto}

\vskip -2\baselineskip plus -1fil
\begin{IEEEbiographynophoto}
{Seung Hyun Oh} received the B.S. degree from Sejong University, Seoul, Korea in 2023. He is currently pursuing a Ph.D. degree at the School of Electrical Engineering, Korea University, Seoul, Korea. His research interests include learning, communication systems, optimization and virtual testbed platform development.
\end{IEEEbiographynophoto}

\vskip -2\baselineskip plus -1fil

\begin{IEEEbiographynophoto}
{Ian P. Roberts} received the B.S. degree in electrical engineering from Missouri University of Science and Technology and the M.S. and
Ph.D. degrees in electrical and computer engineering from The University of Texas at Austin, where he was a National Science Foundation Graduate
Research Fellow with the Wireless Networking and Communications Group. He is currently an Assistant Professor of electrical and computer engineering with UCLA. He has industry experience developing and prototyping wireless technologies at AT\&T Laboratories, Amazon, GenXComm (startup), and Sandia National Laboratories. His research interests include the theory and implementation of millimeter wave systems, in-band full-duplex, and other next-generation technologies for wireless communication and sensing. In 2023, he received the Andrea Goldsmith Young Scholars Award from the Communication Theory Technical Committee of the IEEE Communications Society.
\end{IEEEbiographynophoto}

\vskip -2\baselineskip plus -1fil
\begin{IEEEbiographynophoto}
{Sang Hyun Lee} 
received the B.S. and M.S. degrees from Korea Advanced Institute of Science and Technology (KAIST) in 1999 and 2001, respectively, and his Ph.D. degree from the University of Texas at Austin in 2011. Since 2017, he has been with the School of Electrical Engineering, Korea University, Seoul, Korea. His research interests include learning, inference, optimization and their applications to wireless communications.
\end{IEEEbiographynophoto}

\end{document}